\documentclass[12pt]{article}
\usepackage{psfrag,amsmath,amssymb,latexsym,epic, eepicemu, epsfig,
 float, enumerate}
\usepackage{amscd }
\usepackage{amsfonts}
\usepackage{graphicx}
\usepackage{epsfig,psfrag,amsmath,amssymb,latexsym}
\usepackage{amscd}
\usepackage[dvips]{color}
\usepackage{amsfonts}
\usepackage{graphics}
\usepackage{graphicx}
\usepackage{float}
\usepackage[T1]{fontenc}
\pagestyle{plain} \oddsidemargin0cm \topmargin-.6cm \textheight22cm
\textwidth16cm
\parindent0.5cm

\newtheorem{corollary}{Corollary}[section]

\newtheorem{lemma}{Lemma}[section]

\newtheorem{proposition}{Proposition}[section]

\newenvironment{proof}[1][Proof]{\textbf{#1.} }{\ \rule{0.5em}{0.5em}}

\newcommand{\X}{{\cal X}_o}
\newcommand{\e}{\epsilon}

\def\P{{\bf P}}

\numberwithin{equation}{section}

\def\X{{\cal X}_o}
\def\e{\epsilon}
\begin{document}

\title{On the accuracy of the Viterbi alignment}

\author{Kristi Kuljus, Jüri Lember\thanks{Estonian science foundation grant
no 9288 and targeted financing project SF 0180015s12}}

\maketitle

{\small \vspace{0cm} \hbox{\hspace{1.1 cm}\vbox{\noindent\hsize=8cm Swedish University of Agricultural Sciences\\
Department of Forest Economics\\
901 83 Ume{\aa}, Sweden\\
E-mail: Kristi.Kuljus@slu.se}\vbox{\noindent \hsize=8cm  University of Tartu\\
Institute of Mathematical Statistics \\
Liivi 2-513 50409, Tartu,
Estonia\\
E-mail: jyril@ut.ee}} \vskip 1cm \hfill} \vspace{0.5cm}\noindent
\abstract{\noindent 
In a hidden Markov model, the underlying Markov chain is usually hidden. Often, 
the maximum likelihood alignment (Viterbi alignment) is used as its estimate.  
Although having the biggest likelihood, the Viterbi alignment can behave very 
untypically by passing states that are at most unexpected. 
To avoid such situations, the Viterbi alignment can be modified by forcing it not to pass these states. 
In this article, an iterative procedure for improving the Viterbi alignment is proposed and studied.
The iterative approach is compared with a simple \textit{bunch approach} where a number of states with
low probability are all replaced at the same time. It can be seen that the iterative way of adjusting
the Viterbi alignment is more efficient and it has several advantages over the bunch approach. The same iterative
algorithm for improving the Viterbi alignment can be used in the case of \textit{peeping}, that is
when it is possible to reveal hidden states. In addition, lower bounds for \textit{classification probabilities} of 
the Viterbi alignment under different conditions on the model parameters are studied. 
}

\vskip 1\baselineskip\noindent {\bf Keywords:} hidden Markov model,
Viterbi alignment, segmentation, classification probability.

\vskip 1\baselineskip\noindent {\bf AMS:} {60J10, 60J22, 62M05}

\section{Introduction and preliminaries}\label{Intro}
\subsection{Notation}
\noindent Let $Y=Y_1,Y_2,\ldots $ be a time-homogeneous Markov chain
with states $S=\{1,\ldots,K\}$ and irreducible  transition matrix
$\mathbb P=\big(p_{ij}\big)$. Let $X=X_1,X_2,\ldots $ be a
process such that: 1) given $\{Y_t\}$  the random
variables $\{X_t\}$ are conditionally independent; 2) the
distribution of $X_j$ depends on $\{Y_t \}$ only through $Y_j$. The
process $X$ is sometimes called a {\it hidden Markov process} (HMP)
and the pair $(Y,X)$ is referred to as a {\it hidden Markov model}
(HMM). The name is motivated by the assumption that the process $Y$,
which is sometimes called the {\it regime}, is non-observable. The
distributions $P_s:=\P(X_1\in \cdot|Y_1=s)$ are called {\it emission
distributions}. We shall assume that the emission distributions are
defined on a measurable space  $({\cal X},{\cal B})$, where ${\cal
X}$ is usually $\mathbb{R}^d$ and ${\cal B}$ is the Borel
$\sigma$-algebra. Without loss of generality we shall assume that
the measures $P_s$ have densities $f_s$ with respect to some
reference measure $\mu$. Our notation differs from the one used in
the HMM-literature, where usually $X$ stands for the regime and $Y$
for the observations. Since our study is mainly motivated by
statistical learning, we would like to be consistent with the
notation used there and keep $X$ for observations and $Y$ for latent
variables. Given a set ${\mathcal A}$ and integers $m$ and $n$, $m<n$, we shall
denote any $(n-m+1)$-dimensional vector with all the components in
${\mathcal A}$ by $a_m^n:=(a_m,\ldots,a_n)$. When $m=1$, it will be
often dropped from the notation and we write $a^n\in {\mathcal A}^n$.
\\\\
HMMs are widely used in various fields of applications, including
speech recognition \cite{rabiner, jelinek}, bioinformatics
\cite{koski, BioHMM2}, language processing  \cite{ochney}, image
analysis \cite{gray2} and many others. For general overview about
HMMs, we refer to \cite{HMMraamat} and \cite{HMP}.
%
%
\subsection{Segmentation and standard alignments}
\noindent The {\it segmentation} problem consists of estimating the
unobserved realization of the first $n$ elements of the underlying
Markov chain $Y^n=(Y_1,\ldots, Y_n)$, given the first $n$
observations $x^n=(x_1,\ldots ,x_n)$ from a hidden Markov process
$X^n=(X_1,\ldots,X_n)$ . Formally, we are looking for a mapping
$g:{\cal X}^n \to S^n$ called a {\it classifier}, that maps every
sequence of observations $x^n$  into a state sequence
$g(x^n)=(g_1(x^n),\ldots, g_n(x^n))$, which is often referred to as an {\it
alignment}. Since it is impossible to find the underlying
realization of $Y^n$ exactly, the obtained alignment
$g(x^n)$ has to be the best estimate, in a sense. To measure the goodness of the
obtained alignment (or equivalently of the corresponding classifier),
it is natural to introduce a task-dependent risk function $R(s^n|x^n)$ that gives a measure
of goodness of an alignment $s^n$ given the data $x^n$. For a given risk function, the
best classifier $g$ is then the one that minimizes the risk:
$g(x^n)=\arg\min_{s^n}R(s^n|x^n)$. Such a general risk-based
segmentation theory has been introduced by Lember {\it et al}
\cite{seg, intech} and, independently, by Yau and Holmes \cite{chris}.
The most popular classifier in practice is the
so-called {\it Viterbi  classifier} $v$ that maximizes the posterior
probability, i.e.
$$v(x^n):=\arg\max_{s^n}\P(Y^n=s^n|X^n=x^n).$$
The name  is inherited from the dynamic programming algorithm
(Viterbi algorithm) used for finding it. Obviously, the Viterbi
alignment is not necessarily unique. Despite its popularity, the Viterbi
classifier has some major disadvantages. In particular, the Viterbi
alignment does not minimize the expected number of classification
errors. The best alignment in this sense and therefore also often used in practice
is the so-called {\it pointwise maximum a posteriori (PMAP)} alignment defined as follows:
$$g_t(x^n):=\arg\max_{s\in S} \P(Y_t=s|X^n=x^n), \quad t=1,\ldots,n.$$
Because the value of $g_t(x^n)$ does not depend on $g_{t'}$ for any other $t'\ne t$,
the PMAP-alignment can be obtained pointwise.
Thus, unlike the Viterbi classifier, the PMAP-classifier is purely
local. The lack of global structure is the biggest disadvantage of the
PMAP-classifier, since in the presence of zeros in transition matrix, the alignment
can have zero posterior probability  because of forbidden transitions. Thus, although being best
in the sense of expected number of misclassifications, the PMAP-alignment can have
very low or even zero likelihood. This problem has already been mentioned
in the celebrated tutorial of Rabiner \cite{rabiner} and is probably one of the
main reasons why the Viterbi classifier has become so popular. \\\\
The Viterbi and PMAP-classifier are both commonly used and can be
considered as the {standard}  classifiers in HMM-segmentation. Both
alignments can be easily found with complexity $O(n)$: the Viterbi
alignment can be found with the Viterbi algorithm and the {\it smoothing
probabilities} $\P(Y_t=s|X^n=x^n)$ (and hence also  the
PMAP-alignment) can be calculated with the well-known forward-backward
recursions. As mentioned above, both of them are, in a sense, extreme. In practice
one would like to have an alignment that has reasonably big
 likelihood (at least non-zero) and at the same time rather small
number of expected classification errors. In \cite{seg, intech},
this goal is aimed at by defining new risk functions, so that the corresponding
best classifiers would in some sense be between the two standard
classifiers and have the properties of both the Viterbi and PMAP. In this paper,
we proceed differently. We take the Viterbi alignment and try to
modify it so, that the expected number of classification errors will decrease, but the
posterior probability of the modified alignment will still remain considerably high. This approach is
motivated by the study of classification probabilities introduced in
the next subsection.
%
%
\subsection{Overview of the main results}
\subsubsection{Bounds for classification probabilities}
Given a classifier $g=(g_1,\ldots,g_n)$, the main object of interest
in this paper is  the probability that for a given time point $t=1,\ldots,
n$, the alignment guesses the true state $Y_t$ correctly:
\begin{equation}\label{classification}
\P(Y_t = g_t(x^n)|X^n=x^n).
\end{equation}
Let us call these probabilities {\it classification probabilities}.
Obviously, this probability tends to decrease when the
number of hidden states $K$ increases, and for any $t$ the
classification probability is biggest when $g$ is the
PMAP-classifier. For the PMAP-classifier (and for any
HMM) the following lower bound trivially holds:
$$\P(Y_t = g_t(x^n)|X^n=x^n)\geq {1\over K},\quad t=1,\ldots, n.$$
Thus, for a two-state HMM one can be sure that given the
observations $x^n$ and a time point $t$, the PMAP-classifier guesses
the hidden state $Y_t$ correctly with probability ${1\over 2}$ at least,
even if the overall  probability of observing the PMAP-state
sequence $g(x^n)$ is very small. Given $x^n$, the sum of the classification
probabilities is just the expected number of correctly classified
states:
$$E\big[\sum_{t=1}^n I_{\{Y_t=g_t(x^n)\}}|X^n=x^n\big]=\sum_{t=1}^n\P(Y_t =
g_t(x^n)|X^n=x^n).$$ In our paper, this expectation is referred to
as the {\it accuracy} of the classifier. The PMAP-classifier is
the most accurate classifier and the trivial lower bound above gives that
its accuracy is at least ${n\over K}$. What about the
Viterbi classifier? Can classification probability 
(\ref{classification}) for the Viterbi classifier be arbitrarily low or
does there exist a data-independent lower bound just like for the PMAP-classifier?
Since all together there are at most $K^n$ different state paths, it follows that the Viterbi path must
have the posterior probability at least $K^{-n}$. Since for any $t$, 
the classification probability is the sum of the posterior probabilities
over all the paths passing $v_t$ at $t$, we obtain the following trivial
lower bound:
\begin{equation}\label{classificationV}
\P(Y_t = v_t(x^n)|X^n=x^n)\geq K^{-n}.
\end{equation}
This bound depends on $n$ and is typically not so useful. Does
there exist a positive lower bound not depending on $n$? These
questions are addressed in Section \ref{sec:clp}. It turns out that
the answer depends on the model. We start with an observation that
when the transition matrix has only positive entries, then a 
data-independent lower bound (that depends on the transition matrix)
exists (Proposition \ref{pos}). Thereafter we present a counterexample
showing that with zeros in the transition matrix this is not
necessarily the case, and for such models classification
probability (\ref{classification}) can be arbitrarily small
(Subsection \ref{counterex}). This counterexample is alarming, since
it shows that although having the biggest likelihood, the Viterbi
alignment can (and when $n$ is big enough, then eventually will)
sometimes behave highly untypically by passing at certain time $t$
a state that is at most unexpected. Hence, for these models there does
not exist a constant data-independent lower bound. However, as shown
in \cite{kuljus}, under some mild conditions there still exists a
data-dependent lower bound (Lemma \ref{lemmauw}). From this lemma it
follows that for a stationary HMM, the tail of the random variable
$$-\ln \P(Y_t = v_t(X^n)|X^n)$$ has an exponential decay independent
of $t$ and $n$. Thus, there exist positive constants $r$ and $d$ so that
for any $t$, any $n$ and any $u>0$,
\[\P\Big(-\ln \P(Y_t = v_t(X^n)|X^n)>u \Big)\leq r \exp[-d u] \]
(Corollary \ref{cor2}). Hence, the classification probability can be
arbitrarily small, but such events occur with certain probability
only. As shown in \cite{kuljus}, such a lower bound is useful when 
proving asymptotic results for segmentation.
\subsubsection{Modified Viterbi alignment: motivation}
As explained above, the classification  probabilities of the Viterbi
alignment might be rather small. A small classification probability
at  $t$ means that in most cases the Viterbi alignment guesses the
hidden state $Y_t$ incorrectly. Hence, to control the accuracy, a
natural idea seems to be to modify the Viterbi alignment by
forcing it not to pass such states. More precisely, one
can proceed as follows. Given a threshold parameter $\delta>0$, find all time points
$t$ such that $\P(Y_t= v_t(x^n)|X^n=x^n)\leq \delta$. Let that set be
$T$. Then, for every $t\in T$, find the PMAP-state at $t$, i.e.~find $g_t(x^n)=\arg\max_{s\in S}\P(Y_t=s|X^n=x^n)$.
 After that determine the {\it  restricted Viterbi alignment}
 $$u(x^n):=\arg\max_{s^n\in S^n: s_t=g_t,\,t\in
 T}\P(Y^n=s^n|X^n=x^n).$$
Note that the alignment $u$ equals with the PMAP-alignment at every $t\in T$,
but outside of $T$, the alignment $u$ might still differ from the
Viterbi alignment $v$. In what follows, the described method will be
referred to as the {\it bunch approach}. There are two problems
connected with the bunch approach:
\begin{description}
  \item[1)] Typically a low classification probability entails
 that the Viterbi path has to be isolated for quite a long time. In
 particular, it means that if the classification probability is
 low at some time point $t$, then it is low also in the neighbourhood. Thus,
using the approach above, usually several consecutive  time points
should be replaced by the PMAP-states. This in turn can involve
impossible transitions, so that the obtained alignment $u(x^n)$ can
have zero posterior probability. We shall see in Section
\ref{sec:algorithm} that this can happen.
  \item[2)] Since the alignment $u$ equals with the 
PMAP-alignment at every $t\in T$, the classification probability of $u$ at 
$t\in T$ is biggest possible and (given the threshold $\delta$
is not too big) hence for every $t\in T$,
$\P(Y_t=u_t|X^n=x^n)>\delta$. However, since the alignment $u$ can 
differ from the Viterbi alignment $v$ also outside of $T$, the
 probability $\P(Y_t=u_t|X^n=x^n)$ might
drop below $\delta$ somewhere else.
\end{description}
As a remedy against both mentioned disadvantages, in Section
\ref{sec:algorithm} we propose a more elaborated {\it iterative}
 modification of the Viterbi alignment. To understand the idea of the iterative approach better,
 imagine that at some few time points it is possible to figure out the true
underlying states of hidden $Y$. This can be a realistic
situation in practice, but often figuring out true states 
costs a lot, so this can be done at some few well-chosen time points only.
In what follows, revealing the hidden state shall be called as
{\it peeping} the true state. Since we can not peep often, it is
meaningful to do it at some time point $t$ only if the classification
probability at $t$ is very low, because then the Viterbi alignment
is most likely wrong. Again, one could use the bunch approach: 
figure out the set of time points with lowest misclassification
probabilities and peep them all together, and then find the restricted
Viterbi alignment. Since the revealed states correspond to the true
underlying path, all the transitions in the restricted Viterbi alignment are
possible, and therefore it definitely has positive likelihood. Thus, problem 1) mentioned above disappears.
However, it turns out that peeping is more efficient when it is done iteratively.
\\\\
Start with finding the time point $t_1$ with the lowest classification
probability and peep at $t_1$. Since we now know the value of
$Y_{t_1}$, let it be $y_{t_1}$, we take  this information into
consideration. Thus, in addition to finding the restricted Viterbi
alignment, say $v^{(1)}(x^n)$, it is meaningful to recalculate all the
smoothing probabilities under additional condition
$Y_{t_1}=y_{t_1}$. Hence, we find the conditional classification
probabilities
$$\P(Y_t=v^{(1)}_t(x^n)|X^n=x^n,Y_{t_1}=y_{t_1}),\quad t=1,\ldots,n.$$
Next, find the time point $t_2$ with the smallest conditional classification
probability, peep at $t_2$ and determine the restricted Viterbi alignment,
i.e.~the maximum likelihood path that passes $y_{t_1}$ at $t_1$ and
$y_{t_2}$ at $t_2$. Then calculate again the conditional classification
probabilities by conditioning on $Y_{t_1}=y_{t_1}$ and $Y_{t_2}=y_{t_2}$. Thereafter, find $t_3$ with the lowest
conditional classification probability and so on. In Section
\ref{sec:algorithm}, we present simulations that demonstrate that the
iterative approach is more efficient than the bunch approach, because the
same effect, that is a certain decrease in the number of  classification errors, can be
obtained with considerably fewer number of peepings. 
When  peeping is not possible, then instead of the true state we consider
the PMAP-state as the one being the most likely hidden state. Thus, in this case  
the iterative algorithm uses PMAP-replacements. Again,
the simulation examples in Section \ref{sec:algorithm} demonstrate the
advantage of the iterative algorithm over the bunch approach also in the case of PMAP-replacements. The
explicit description of the iterative algorithm is given in Section
\ref{sec:algorithm}.
\subsubsection{Unsuccessful peeping}
It turns out that the question of choosing the right peeping points
is more important as it might seem at first sight. Indeed, if we
peep at time point $t$ and see that the Viterbi alignment guesses
$Y_t$ correctly, i.e.~$v_t=Y_t$, then  the restricted Viterbi
alignment coincides with the original one and hence, nothing
changes. On the other hand, if $Y_t\ne v_t$, then the restricted
Viterbi alignment differs from the original one and typically more than
just at $t$. Is the average number of correctly classified states
now bigger? Clearly, peeping induces one correctly estimated state,
because $Y_t$ is correct. However, in Section \ref{sec:peep}
we present a counterexample illustrating that it is possible that the
restricted Viterbi alignment behaves so badly in the neighbourhood of
$t$, that the accuracy (the average number of correctly classified
states) drops significantly. In other words, despite the fact that the
restricted Viterbi alignment guesses one more state correctly,
the  average number of correctly classified states for the restricted alignment is worse
than for the unrestricted Viterbi. Therefore, in this example
peeping either does not change anything or makes the alignment even
worse, so that the overall effect of peeping is negative! Moreover,
we show that the example can be constructed so that the 
expected number of classification errors induced by peeping can
be arbitrarily large.\\\\
In this example, the badly chosen $t$ has high classification
probability. Thus, peeping at such $t$ does not make much sense, and
neither the bunch nor the iterative approach would pick $t$ as a possible
peeping time. However, it is intriguing to know whether it would be
possible to have such counterexamples also with lower classification
probabilities. More generally, would it be possible to find out (based
on the data $x^n$ and the model) whether the effect of peeping at
$t$ is non-negative? And are there any models (two-state HMMs or 
HMMs with positive transitions, perhaps), where peeping is
guaranteed to have a non-negative effect only? These questions are
the subject of the future research.\\\\
The paper is organized as follows. In Section \ref{sec:clp}, the
classification probabilities and their lower bounds are studied.
Section \ref{sec:algorithm} is devoted to the
iterative algorithm and to simulations illustrating its
behavior. Section \ref{sec:peep} presents the counterexample showing that
peeping can increase the expected number of classification
errors.
%
%
%
\section{Lower bounds for classification
probabilities}\label{sec:clp}
In this section, we study classification probabilities
(\ref{classification}) for the Viterbi alignment. Recall that the
accuracy of an alignment is just the sum of the corresponding
classification probabilities. At first we note that when all the
transition probabilities are positive, then there exists a data-independent lower
bound to the classification probabilities, hence there exists also a
lower bound to the accuracy that is linear in $n$. Then we present
a counterexample showing that in the presence of forbidden
transitions this is not the case, and the classification probability
can be arbitrarily low. Finally, we prove that under an additional
condition, low classification probabilities
occur with certain small probability only.\\\\
Throughout the paper we shall use the following notation. 
For any sequence of observations $x^n$ and any state
sequence $y^n$, $p(x^n)$ stands for the likelihood and $p(x^n,y^n)$ for  the joint likelihood.
For any $s\in S$ and $k=1,\ldots,n$, define the $\alpha$-variables
\begin{align*}
\alpha(x^k,s)&:=\sum_{y^k:y_k=s}p(x^k,y^k),\quad
\alpha(s,x_k^n):=\sum_{y_k^n:y_k=s}p(x_k^n,y_k^n).\quad
\end{align*}
Thus
\[p(x^n)=\sum_s\alpha (x^n,s).\]
Finally, let for any $s\in S$ and $t=1,\ldots,n$, 
$$\gamma_t(s):=\P(Y_t=s|X^n=x^n)p(x^n).$$
When the emission distributions are discrete, then
\begin{align*}
\alpha(x^k,s)&=\P(X^k=x^k,Y_k=s),\quad
\alpha(s,x^k)=\P(X_k^n=x_k^n,Y_k=s),\\
 p(x^n)&=\P(X^n=x^n),\quad
\gamma_t(s)=\P(X^n=x^n,Y_t=s).\end{align*}
\subsection{Positive transitions}
We need some additional notation. Recall that $\mathbb P=(p_{ij})$
denotes the transiton matrix of $Y$. Let
\begin{equation}\label{sigma}
\sigma_1=:\min_s{\min_{s'}p_{ss'}\over \max_{s'}p_{ss'}},\quad
\sigma_2=:\min_s{\min_{s'}p_{s's}\over \max_{s'}p_{s's}}.
\end{equation}
Clearly, $\sigma_1>0$ if and only if all the transitions are positive
and the same holds for $\sigma_2$.
The following proposition is a special case of Proposition 4.1 in \cite{kuljus}.
The proof is given in Appendix.
\begin{proposition}\label{pos} Assume that all the transition probabilities are positive.
Let $\pi$ be arbitrary initial distribution with $K_1$ non-zero entries. Then
the following bounds hold:
\begin{align*}
\P(Y_t=v_t(x^n)|X^n=x^n)&\geq {\sigma_1^2\sigma_2^2 \over
\sigma_1^2\sigma_2^2 +(K-1)},\quad t=2,\ldots,n-1,\\
\P(Y_1=v_1(x^n)|X^n=x^n)&\geq {\sigma_1^2 \over \sigma_1^2
+(K_1-1)},\quad \P(Y_n=v_n(x^n)|X^n=x^n)\geq {\sigma_2^2 \over
\sigma_2^2 +(K-1)}.\end{align*}
\end{proposition}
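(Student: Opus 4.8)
The plan is to reduce the statement to a pointwise comparison of the smoothing weights $\gamma_t(s)=\P(Y_t=s\,|\,X^n=x^n)\,p(x^n)$. Since $\P(Y_t=v_t\,|\,X^n=x^n)=\gamma_t(v_t)/\sum_{s}\gamma_t(s)$, it suffices to show that for every state $s$ and every interior $t$ the ratio $\gamma_t(s)/\gamma_t(v_t)$ is at most $\sigma_1^{-2}\sigma_2^{-2}$. The desired bound then follows by pure arithmetic: the sum $\sum_s \gamma_t(s)/\gamma_t(v_t)$ equals $1$ (the term $s=v_t$) plus at most $K-1$ terms each bounded by $\sigma_1^{-2}\sigma_2^{-2}$, so inverting gives exactly $\sigma_1^2\sigma_2^2/(\sigma_1^2\sigma_2^2+(K-1))$.

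First I would write $\gamma_t(s)$ in forward--backward form by conditioning on the neighbouring states $Y_{t-1}$ and $Y_{t+1}$,
\begin{equation*}
\gamma_t(s)=\Big(\sum_{s'}\alpha(x^{t-1},s')\,p_{s's}\Big)\,f_s(x_t)\,\Big(\sum_{s''}p_{ss''}\,\alpha(s'',x_{t+1}^n)\Big),
\end{equation*}
and divide this expression for $s$ by the one for $v_t$. This splits $\gamma_t(s)/\gamma_t(v_t)$ into three factors: an incoming transition-sum ratio, the emission ratio $f_s(x_t)/f_{v_t}(x_t)$, and an outgoing transition-sum ratio. The two transition-sum ratios are controlled directly by (\ref{sigma}): in the incoming sum the source $s'$ is common to numerator and denominator while the destination runs over $s$ versus $v_t$, so a term-by-term comparison of two entries in the same row gives a factor at most $\sigma_1^{-1}$; in the outgoing sum the destination $s''$ is common and the source runs over $s$ versus $v_t$, a within-column comparison giving a factor at most $\sigma_2^{-1}$.

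The hard part is the emission ratio $f_s(x_t)/f_{v_t}(x_t)$, which is not controlled by the transition matrix and is precisely where the Viterbi property must enter. Here I would use optimality of the Viterbi path: let $\tilde y^n$ agree with $v^n$ everywhere except at $t$, where $\tilde y_t=s$. Then $p(x^n,v^n)\ge p(x^n,\tilde y^n)$, and since the two joint likelihoods differ only in the three factors attached to time $t$, this reads $p_{v_{t-1}v_t}f_{v_t}(x_t)p_{v_tv_{t+1}}\ge p_{v_{t-1}s}f_s(x_t)p_{sv_{t+1}}$. Solving for the emission ratio gives $f_s(x_t)/f_{v_t}(x_t)\le (p_{v_{t-1}v_t}/p_{v_{t-1}s})(p_{v_tv_{t+1}}/p_{sv_{t+1}})$, where the first quotient is a within-row ratio (source $v_{t-1}$ fixed) bounded by $\sigma_1^{-1}$ and the second a within-column ratio (destination $v_{t+1}$ fixed) bounded by $\sigma_2^{-1}$. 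Multiplying the three factors yields $\gamma_t(s)/\gamma_t(v_t)\le\sigma_1^{-2}\sigma_2^{-2}$, as required. If $f_s(x_t)=0$ then $\gamma_t(s)=0$ and there is nothing to prove, so one may assume $\tilde y^n$ has positive likelihood and the inequality is a genuine constraint.

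Finally I would treat the endpoints $t=1$ and $t=n$ separately, since there only one neighbour is available. At $t=1$ the decomposition carries the initial distribution $\pi$ and a single outgoing transition, so both the rerouting step and the one surviving transition-sum ratio cost a single $\sigma$-factor each, producing a single squared factor as recorded in the statement; moreover the sum now runs only over the $K_1$ states of positive prior mass, since $\pi(s)=0$ forces $\gamma_1(s)=0$. The case $t=n$ is symmetric, involving the incoming transition, with all $K$ states reachable after one positive-probability step. Summing as in the first paragraph then gives the two boundary bounds. The only real subtlety throughout is the emission ratio; everything else is bookkeeping with (\ref{sigma}).
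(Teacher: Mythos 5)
Your interior case ($t=2,\dots,n-1$) is correct and is essentially the paper's own proof: the paper writes $\gamma_t(s)$ as the same forward--backward sum, bounds it above and below by pulling $\min/\max$ transition entries out of the two sums (equivalent to your term-by-term comparison), and controls the emission ratio by exactly your single-site swap inequality, which is (\ref{vorra}). Same decomposition, same three factors, same constant $\sigma_1^2\sigma_2^2$, and the same arithmetic to pass from the ratio bound to the classification probability.

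The endpoint cases, however, contain a genuine gap. Run your own bounding rules at $t=1$: after the $\pi$-ratios cancel, both surviving quotients --- the swap quotient $p_{v_1v_2}/p_{s v_2}$ and the term-by-term outgoing ratios $p_{ss''}/p_{v_1 s''}$ --- are \emph{within-column} comparisons, each at most $\sigma_2^{-1}$, so your argument delivers $\sigma_2^2/(\sigma_2^2+(K_1-1))$, not the stated $\sigma_1^2/(\sigma_1^2+(K_1-1))$. Symmetrically, at $t=n$ both surviving quotients ($p_{s's}/p_{s'v_n}$ and $p_{v_{n-1}v_n}/p_{v_{n-1}s}$) are within-row, and your method gives $\sigma_1^2/(\sigma_1^2+(K-1))$ rather than the stated $\sigma_2^2/(\sigma_2^2+(K-1))$. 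These transposed bounds are valid, but they are not the proposition, and neither dominates the other in general, so the claim that your endpoint computation produces the factors ``as recorded in the statement'' is unjustified as written. To get the stated constants you must regroup extremes \emph{across} the two quotients, as the paper does: at $t=1$ bound $p_{sv_2}\ge\min_{s'}p_{ss'}$ and $p_{v_1v_2}\le\max_{s'}p_{v_1s'}$, and likewise $p_{ss''}\le\max_{s'}p_{ss'}$, $p_{v_1s''}\ge\min_{s'}p_{v_1s'}$, so that the product rearranges into the two within-row ratios
\begin{equation*}
\Bigl(\frac{\min_{s'}p_{ss'}}{\max_{s'}p_{ss'}}\Bigr)\Bigl(\frac{\min_{s'}p_{v_1s'}}{\max_{s'}p_{v_1s'}}\Bigr)\ \ge\ \sigma_1^2;
\end{equation*}
the mirror-image regrouping through column extremes yields $\sigma_2^2$ at $t=n$. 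With that repair the endpoint cases go through; everything else in your write-up (the $K_1$ restriction via $\pi_s=0\Rightarrow\gamma_1(s)=0$, the dismissal of $f_s(x_t)=0$, the final summation) is sound.
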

Note that for $t=2,\ldots,n$, the lower bounds for the classification probabilities do
not depend on the initial distribution. Hence, the bounds hold also
for stationary distribution. For a stationary chain, the Viterbi
alignment as well as the smoothing probabilities do not depend on
whether the forward or backward chain is considered. Hence, for the time-reversed chain,
the bounds should remain the same, provided that $\sigma_1$ and $\sigma_2$
correspond to the time-reversed chain. Let $\pi$ be now
the stationary distribution and let $q_{ss'}$ denote the transition
probabilities for the reversed chain, then:
$$q_{ss'}={p_{s's}\pi_{s'}\over \pi_{s}}.$$
Let $\sigma'_1$ and $\sigma'_2$ be the minimum values as in (\ref{sigma}) corresponding
to the reversed chain. If the underlying Markov chain is reversible, 
then $q_{ss'}=p_{ss'}$ and $\sigma'_i=\sigma_i$, $i=1,2$. When $\pi$ is
uniform, then $q_{ss'}=p_{s's}$ ($\mathbb P$ is double-stochastic random matrix), hence
$\sigma'_1=\sigma_2$ and $\sigma'_2=\sigma_1$. In both cases $\sigma'_1\sigma'_2 = \sigma_1\sigma_2$
and the lower bounds for $t=2,\ldots, n-1$ remain unchanged. But in general,
$\sigma'_1\sigma'_2\ne \sigma_1\sigma_2$, thus the following corollary is meaningful.
%
\begin{corollary} Assume that all the transition probabilities are positive. Then, if 
the initial distribution is stationary, the following bounds hold:
\[
\P(Y_t=v_t(x^n)|X^n=x^n)\geq {(\sigma_1\sigma_2\vee
\sigma'_1\sigma'_2)^2 \over (\sigma_1\sigma_2\vee
\sigma'_1\sigma'_2)^2 +(K-1)},\quad t=2,\ldots,n-1,\]
$$\P(Y_1=v_1(x^n)|X^n=x^n)\geq {(\sigma_1\vee \sigma'_2)^2 \over
(\sigma_1\vee \sigma'_2)^2   +(K-1)},$$
$$\P(Y_n=v_n(x^n)|X^n=x^n)\geq {(\sigma_2\vee \sigma'_1)^2 \over
(\sigma_2\vee \sigma'_1)^2   +(K-1)}.$$
\end{corollary}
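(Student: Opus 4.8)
The plan is to apply Proposition \ref{pos} twice --- once to the forward chain and once to the time-reversed chain --- and then to keep the stronger of the two resulting lower bounds at each time point. The discussion preceding the corollary already records the two facts that make this work: for a stationary chain the posterior $\P(Y^n=s^n|X^n=x^n)$ is invariant under reversing the time axis, so that the Viterbi alignment and all smoothing probabilities agree for the forward and backward descriptions; and the reversed chain is again a stationary Markov chain, now with transition matrix $q_{ss'}=p_{s's}\pi_{s'}/\pi_s$ and the same stationary distribution $\pi$.

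First I would make the reversal explicit. Put $\tilde Y_\tau=Y_{n+1-\tau}$, $\tilde X_\tau=X_{n+1-\tau}$ and $\tilde x_\tau=x_{n+1-\tau}$; then $(\tilde Y,\tilde X)$ is an HMM with the same emission densities $f_s$ and regime transition matrix $(q_{ss'})$. Matching the event $\{Y^n=s^n,X^n=x^n\}$ with $\{\tilde Y^n=\tilde s^n,\tilde X^n=\tilde x^n\}$, where $\tilde s_\tau=s_{n+1-\tau}$, shows that the joint laws coincide, hence
\[\P(\tilde Y_\tau=s|\tilde X^n=\tilde x^n)=\P(Y_{n+1-\tau}=s|X^n=x^n),\qquad \tilde v_\tau=v_{n+1-\tau}.\]
In particular the classification probability of the reversed HMM at position $\tau$ is exactly the classification probability of the original HMM at position $n+1-\tau$.

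Next I would check the hypotheses of Proposition \ref{pos} for the reversed chain. Since the forward transitions are all positive and $\pi$, the stationary distribution of an irreducible chain with positive entries, has all coordinates positive, the reversed transitions $q_{ss'}$ are positive as well; moreover the reversed initial distribution is again $\pi$, so it has $K_1=K$ non-zero entries. Hence Proposition \ref{pos} applied to $(\tilde Y,\tilde X)$, with $\sigma_1',\sigma_2'$ in place of $\sigma_1,\sigma_2$, produces lower bounds for $\P(\tilde Y_\tau=\tilde v_\tau|\tilde X^n=\tilde x^n)$; rewriting them through the identity above gives a second family of lower bounds for $\P(Y_t=v_t|X^n=x^n)$ --- with argument $(\sigma_1'\sigma_2')^2$ for interior $t$, with $(\sigma_2')^2$ at $t=1$ (position $n$ of the reversed chain), and with $(\sigma_1')^2$ at $t=n$ (position $1$ of the reversed chain, where $K_1=K$ again removes the dependence on the initial distribution).

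Finally I would combine the forward bound of Proposition \ref{pos} with this reversed bound by taking their maximum at each $t$. Because $x\mapsto x/(x+(K-1))$ is increasing on $[0,\infty)$, the larger bound comes from the larger argument, namely from $(\sigma_1\sigma_2\vee\sigma_1'\sigma_2')^2$ for interior $t$, from $(\sigma_1\vee\sigma_2')^2$ at $t=1$, and from $(\sigma_2\vee\sigma_1')^2$ at $t=n$, which are precisely the stated bounds. The only step requiring genuine care, rather than routine verification, is the time-reversal bookkeeping: that interior positions map to interior positions while the two endpoints swap, and that the reversed chain's $\sigma_1$ is $\sigma_1'$ and its $\sigma_2$ is $\sigma_2'$, so that $\sigma_1'$ governs the reversed first-point bound and $\sigma_2'$ the reversed last-point bound.
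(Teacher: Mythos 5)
Your proof is correct and follows essentially the same route as the paper: the discussion preceding the corollary supplies the time-reversal argument (stationarity makes the Viterbi alignment and smoothing probabilities reversal-invariant, with $\sigma_1,\sigma_2$ replaced by $\sigma_1',\sigma_2'$ and the endpoints swapped), and the paper's one-line proof is exactly your final step, the monotonicity of $a\mapsto a/(a+(K-1))$. You have merely made explicit the bookkeeping---including the observation that $K_1=K$ for the stationary initial distribution---that the paper leaves implicit, and you got it right.
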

\begin{proof} The proof follows from the fact that when $a>b>0$,
then
$${a\over a+(K-1)}>{b\over b+(K-1)}.$$
\end{proof}
\\\\
{\bf Example}. An important two-state HMM is the model with
transition matrix
$$\mathbb P=\left(
  \begin{array}{cc}
    1-\epsilon_1 & \epsilon_1 \\
    \epsilon_2 & 1-\epsilon_2 \\
  \end{array}
\right),$$ where $0<\epsilon_1,\epsilon_2\leq 0.5$. Without loss of
generality, let $\epsilon_1\leq \epsilon_2$. Then
$\sigma_1={\epsilon_1\over 1-\epsilon_1}$ and
 $\sigma_2={\epsilon_1\over 1-\epsilon_2}$. The transition matrix of
 the reversed chain remains the same, hence $\sigma_i'=\sigma_i$,
 $i= 1,2$. Thus, the obtained bounds are
\begin{align*}
 \P(Y_1=v_1(x^n)|X^n=x^n)&\geq {\e_1^2\over \e_1^2
 +(1-\e_1)^2},\quad \P(Y_n=v_n(x^n)|X^n=x^n)\geq {\e_1^2\over \e_1^2
 +(1-\e_2)^2},\\
\P(Y_t=v_t(x^n)|X^n=x^n)&\geq {\e_1^4\over \e_1^4
 +(1-\e_1)^2(1-\e_2)^2},\quad t=2,\ldots,n-1.
 \end{align*}
Note that when $\e_1=\e_2=0.5$, then the underlying Markov chain consists of iid Bernoulli random variables with parameter
$0.5$. In this case the Viterbi and the PMAP-alignment are the same.
Given that the ties are broken in favor of 1, $v_t(x^n)=1$ if and only if $f_1(x_t)\geq f_2(x_t)$ .
All the bounds above equal ${1\over 2}$, which is clearly a tight bound.
Without loss of generality, let $v_t(x^n)=1$.
The classification probability in this trivial case can be calculated as
$$\P(Y_t=v_t(x^n)|X^n=x^n)=\P(Y_t=1|X_t=x_t)={f_{1}(x_t)\over
f_{1}(x_t)+f_{2}(x_t)}\geq {1\over 2}.$$
%
\subsection{General case}
The proof of Proposition \ref{pos} holds only in the case of transition matrices
with non-zero entries. The following counterexample shows that
if the transition matrix contains zeros, a data-independent lower bound 
to the classification probabilities does not exist.
\subsubsection{Counterexample}\label{counterex} Consider a 4-state model with the transition matrix
and initial distribution given by
$$\mathbb P=\left(
    \begin{array}{cccc}
      {1\over 2} & {1\over 2} & 0 & 0 \\
      {1\over 4} & {1\over 4} & {1\over 4} & {1\over 4} \\
      0 & {1\over 3} & {1\over 3} & {1\over 3}  \\
      0 & {1\over 3} & {1\over 3} & {1\over 3} \\
    \end{array}
  \right), \quad \pi=(1/4,1/4,1/4,1/4)'.
$$
Suppose the emission distributions are all discrete, hence $\mu$ is
counting measure and $f_i(x)$, $i=1,\ldots,4$, are all
probabilities. Suppose there exist atoms $x$ and $y$ so that
emission probabilities satisfy the following conditions:
\begin{itemize}
  \item[1)] $f_2(x)=0$, $f_1(x)=f_3(x)=f_4(x)=A>0$,
  \item[2)] $f_1(y)=f_3(y)=f_4(y)=0$, $f_2(y)=D>0$.
\end{itemize}
Let $\e>0$ be arbitrary. We shall show that for $n$ big enough,
there exists a sequence of observations $x^n$ with $p(x^n)>0$,
such that for some time point $t$,
\begin{equation}\label{contra}
\P(Y_t=v_t(x^n)|X^n=x^n)<\e.
\end{equation}
Let $m\in \mathbb{N}$ be so big that
\[{1\over 1+\big({4\over 3}\big)^m}<\epsilon.\]
Consider a sequence of observations $x_1,\ldots,x_n$, $n>m$, such
that $x_1=x_2=\ldots=x_m=x$ and $x_{m+1}=y$, where $x$ and $y$ are
the defined atoms. By assumptions, the probability of having
such observations is strictly positive. Let the rest of the
observations, that is $x_{m+2},\ldots,x_n$, be arbitrary with the only
requirement that the probability of emitting $x^n$ is positive, i.e.
$p(x^n)>0$.
Note that since all the paths with positive posterior probability,
including the Viterbi path, pass state 2 at time $m+1$, then
$$\P(Y_t=s|X^n=x^n)=\P(Y_t=s|X^{m+1}=x^{m+1}),\quad t=1,\ldots,m+1.$$
Observe also that any path passing state 2 
before time point $m+1$ will have zero posterior probability. Hence,
the only path passing state 1 at any $t\leq m$ is the path that is
constantly in state 1 up to time $m$. Therefore, for any
$t=1,\ldots, m$,
$$\P(Y_t=1,X^{m+1}=x^{m+1})=\Big({1\over 4}\Big)A^m\Big({1\over 2}\Big)^{m}D=A^m\Big({1\over 2}\Big)^{m+2}D.$$
Note also that there is no path with transition from state 3 or 4
into state 1 that would have positive posterior probability. Hence,
for $s=3,4$ and for any $t\le m$,
 $$\alpha(x^t,s)=2^{t-1}\Big({1\over 4}\Big)A^t\Big({1\over 3}\Big)^{t-1},$$
implying that
$$ \P(Y_m=s,X^{m+1}=x^{m+1})=2^{m-1}\Big({1\over 4}\Big)A^m\Big({1\over 3}\Big)^{m}D.$$
It follows that
$$\P(Y_{m}=1|X^n=x^n)={A^m\big({1\over 2}\big)^{m+2}\over A^m\big({1\over 2}\big)^{m+2}+2^m\big({1\over 4}\big) A^m\big({1\over 3}\big)^m}
={1\over 1+\big({4\over 3}\big)^m}<\epsilon.$$
Thus, if $v_m(x^n)=1$, then (\ref{contra}) holds. Let us show that
up to time point $m$, the Viterbi alignment is given by $v_1=v_2=\cdots =
v_m=1$. Since the Viterbi path passes state 2 at $m+1$, by
optimality principle the observations $x_{m+2},\ldots,x_n$ do not
affect the alignment up to $m+1$. Therefore, it is sufficient to
consider the joint likelihood up to $m+1$. For $u_1=\cdots = u_m=1$,
$u_{m+1}=2$,
$$p(x^{m+1},u^{m+1})=\Big({1\over 4}\Big)\Big({1\over 2}\Big)^{m}A^mD.$$
All other paths  with positive posterior probability can up to
time $m$ pass states 3 and 4 only. For any such path $s^{m+1}$,
$$p(x^{m+1},s^{m+1})=\Big({1\over 4}\Big)\Big({1\over 3}\Big)^{m}A^mD,\quad s_t\in \{3,4\},\quad t=1,\ldots, m,\quad s_{m+1}=2.$$
Since ${1\over 2}>{1\over 3}$, we have
$p(x^{m+1},u^{m+1})>p(x^{m+1},s^{m+1})$, and therefore
$v(x^{m+1})=u^{m+1}$. Thus, (\ref{contra}) holds for $t=m$.
%
%
\subsubsection{Data-dependent lower bound}
\paragraph{Cluster assumption.}
We shall relax the assumption of positive transitions by the
following much weaker assumption. Let $G_j$ denote the support of the emission distribution $P_j$.
We call a subset $C\subset S$  {\it a cluster} if the following
conditions are satisfied:
$$\min_{j\in C}P_j(\cap _{s\in C}G_s)>0 \quad{\rm and}\quad\max_{j\not\in C}P_j(\cap _{s\in C}G_s)=0.$$
Hence, a cluster is a maximal subset of states such that $G_C:=\cap
_{s\in C}G_s$, the intersection of the supports of the corresponding
emission distributions, is  `detectable'. Distinct clusters need not
be disjoint and  a cluster can consist of a single state. In this
latter case such a state is not hidden, since it is exposed by any
observation it emits. If $K=2$, then $S$ is the only  cluster
possible, because otherwise the underlying Markov chain would cease
to be hidden.
The existence of $C$ implies the existence of a set $\X \subset
\cap_{s\in C }G_s$ and $\e>0$, $M<\infty$,
 such that $\mu (\X)>0$, and $\forall x\in
{\X}$ the following statements hold: (i) $\e<\min_{s\in C} f_s(x)$;
(ii) $\max_{s\in C} f_s(x)<M$; (iii) $\max_{s\not \in C} f_s(x)=0$.
For proof, see \cite{IEEE}.\\\\
{\bf A1 (cluster-assumption):} There exists a cluster $C\subset S$
such that the sub-stochastic matrix $R=(p_{ij})_{i,j\in C}$ is
primitive, i.e.~there is a positive integer $r$ such that the $r$-th
power of $R$ is strictly positive.\\\\
The cluster assumption {\bf A1} is often met in practice. It is
clearly satisfied if all the elements of $\mathbb P$ are
positive. Since any irreducible aperiodic matrix is primitive,
assumption {\bf A1} is also satisfied if the densities $f_s$ satisfy
the following condition: for every $x\in {\cal X}$, $\min_{s\in
S}f_s(x)>0$, i.e.~for all $s\in S$, $G_s={\cal X}$. Thus, {\bf A1}
is more general than the {\it strong mixing condition} (Assumption
4.2.21 in \cite{HMMraamat}) and also weaker than Assumption 4.3.29
in \cite{HMMraamat}. Note that {\bf A1} implies the aperiodicity of
$Y$, but not vice versa.
\\\\
{\bf Example.} Let us reconsider the counterexample in Subsection
\ref{counterex}. The example is very easy to modify so that {\bf A1}
holds. It suffices to have one atom, say $z$, so that $f_j(z)>0$ for
every $j=1,2,3,4$. Then  $z\in \cap_{j\in C} G_j$ so that the
cluster consists of all states, i.e. $C=\{1,2,3,4\}$. Note
that $\mathbb{P}^2$ is primitive, so that $r=2$. The set $\X$ can be
taken as  $\{z\}$.
\\\\
Let $x^n$ be fixed and $\X$ and $r$ be as in {\bf A1}. Define for
any $t\in \{1,\ldots,n\}$,
\begin{align*}
w_t(x^n)&:=\min\{t+r<w\leq n: x^{w}_{w-r}\in \X^{r+1}\}\wedge n,\\
u_t(x^n)&:=\max\{ 1\leq u<t-r: x_{u}^{u+r}\in \X^{r+1}\}\vee 1,
\end{align*}
where minimum over the empty set is set to $\infty$ and maximum over the empty set
is set to $-\infty$. Thus, $w_t$ is the first time after $t$ when a word
from $\X^{r+1}$ is fully observed, and $w_t=n$ if there is no such
word up to time $n$. Similarly, $u_t$ is the last time before $t$ when
a word from $\X^{r+1}$ is fully observed, and $u_t=1$ if there is no such
word up to time $n$. The following lemma follows from Proposition
4.1 and Corollary 4.1 in \cite{kuljus}.
\begin{lemma}\label{lemmauw} There exist constants $c>0$ and $0<A<\infty$ such that
for every $t=1,\ldots,n$,
\begin{equation}\label{uw}
\P(Y_t=v_t(x^n)|X^n=x^n)\geq c\exp[-A(w_t-u_t)].
\end{equation}
\end{lemma}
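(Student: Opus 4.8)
The plan is to prove the bound by a path‑surgery (rerouting) argument combined with the optimality of the Viterbi path. Writing $L:=w_t-u_t$ and recalling $\gamma_t(v_t)=\sum_{y^n:\,y_t=v_t}p(x^n,y^n)$ and $p(x^n)=\sum_{y^n}p(x^n,y^n)$, the claim (\ref{uw}) is equivalent to $\gamma_t(v_t)\ge c\,e^{-AL}p(x^n)$. Two structural facts drive everything, and I would establish them first. By property (iii) of $\X$, an observation in $\X$ can be emitted only by states of $C$, so any path of positive joint likelihood (and in particular $v$) lies entirely in $C$ along each fully observed $\X$-word; when $u_t>1$ and $w_t<n$ this gives $y_{u_t},\dots,y_{u_t+r}\in C$ and $y_{w_t-r},\dots,y_{w_t}\in C$. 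By primitivity of $R$ one can travel inside $C$ between any two states in exactly $r$ steps with transition weight at least $p_{\min}^{\,r}$ (with $p_{\min}$ the smallest positive entry of $\mathbb P$), and along such a detour the $r$ intermediate observations in $\X$ contribute emission factors between $\e^{\,r}$ and $M^{\,r}$ by (i)--(ii).

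Using these, I would define a map $\phi$ sending a positive‑likelihood path $y^n$ to a path $z^n=\phi(y^n)$ that (a) coincides with $y^n$ on $[1,u_t]\cup[w_t,n]$, (b) reroutes inside $C$ over the $\X$-word $[u_t,u_t+r]$ along some fixed length‑$r$ detour from $y_{u_t}$ to $v_{u_t+r}$, (c) follows $v$ on the middle block $[u_t+r,w_t-r]$, and (d) reroutes inside $C$ over $[w_t-r,w_t]$ from $v_{w_t-r}$ to $y_{w_t}$. Since $u_t<t-r$ and $w_t>t+r$, the point $t$ lies strictly inside the middle block, so every image passes through $v_t$; the two facts above guarantee $z^n$ has positive likelihood. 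Because $\phi$ depends on $y^n$ only through its restriction to $[1,u_t]\cup[w_t,n]$, its fibres partition the positive‑likelihood paths.

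The heart of the argument is to compare, on a fibre, the total mass $\sum_{y\in\phi^{-1}(z)}p(x^n,y)$ with $p(x^n,z)$. Factoring out the common exterior contribution, this reduces to comparing $\Sigma_{\mathrm{block}}$, the sum of block‑weights of all paths across $[u_t,w_t]$, with the block‑weight of $z$, which is at least $\e^{2r}p_{\min}^{2r}$ times the Viterbi middle‑weight $\Pi_v^{\mathrm{mid}}$. Bounding $\Sigma_{\mathrm{block}}\le K^{\,L+1}\Pi^{*}_{\mathrm{block}}$ by the best single block‑path, it remains to show $\Pi^{*}_{\mathrm{block}}\le C_0\,\Pi_v^{\mathrm{mid}}$ for a constant $C_0$ depending only on $r,\e,M,p_{\min}$. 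This is where Viterbi optimality enters and is the main obstacle. One takes a block‑path $y^{*}$ attaining $\Pi^{*}_{\mathrm{block}}$ (its endpoints lie in $C$, since they emit $\X$-symbols) and grafts its middle portion onto $v$ by rerouting through the two $\X$-words, obtaining a valid path $\widehat y$ that agrees with $v$ outside $[u_t,w_t]$; then $p(x^n,v)\ge p(x^n,\widehat y)$ forces $\Pi_v^{\mathrm{mid}}$ to dominate the middle‑weight of $y^{*}$ up to the constant detour factors, which after accounting for the boundary $\X$-factors gives the claim. The delicate point is precisely that optimality must be invoked here to keep this comparison free of any factor growing with $n$: the $\X$-word detours are exactly what let a block‑path with free endpoints be spliced onto $v$ at constant cost.

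Combining the estimates yields $\sum_{y\in\phi^{-1}(z)}p(x^n,y)\le C_1K^{\,L+1}p(x^n,z)$ for a constant $C_1$; summing over the distinct images $z$, all of which pass through $v_t$ at $t$, gives $p(x^n)\le C_1K^{\,L+1}\gamma_t(v_t)$, which is exactly (\ref{uw}) with $A=\ln K$ and $c=(C_1K)^{-1}$. Finally I would dispatch the boundary cases $u_t=1$ or $w_t=n$ by the obvious one‑sided versions of $\phi$ (following $v$ up to the first barrier, or from the last barrier onwards, with $\pi_{\min}$ entering the constant), the degenerate case $u_t=1,\ w_t=n$ reducing to the counting bound $\gamma_t(v_t)\ge p(x^n,v)\ge K^{-n}p(x^n)$ already noted for (\ref{classificationV}). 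Since $c,A$ depend only on the model parameters $K,M,\e,p_{\min},r$ and not on $t$ or $n$, the stated uniformity follows.
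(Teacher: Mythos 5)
Your proof is sound, but it is a genuinely different route from the paper's, because the paper contains no argument of its own for Lemma \ref{lemmauw}: it simply imports the statement from Proposition 4.1 and Corollary 4.1 of \cite{kuljus}. The technique behind that citation, as reflected in this paper's Appendix proof of Proposition \ref{pos} (the positive-transitions special case), is a ratio argument: one bounds $\gamma_t(v_t)/\gamma_t(s)$ for each competing state $s$ by decomposing $\gamma_t$ into forward and backward variables and feeding in a local Viterbi-optimality inequality of the type (\ref{vorra}), with the $\X$-words at $[u_t,u_t+r]$ and $[w_t-r,w_t]$ serving as the places where paths can be rerouted at bounded cost. You replace the state-by-state ratio comparison by a single global map $\phi$ on path space plus a counting bound, and the checks all go through: positive-likelihood paths are confined to $C$ at word positions by property (iii) of $\X$, primitivity of $R$ supplies length-$r$ detours of transition weight at least $p_{\min}^r$ with emissions in $[\e,M]$ by (i)--(ii), the fibres are indexed by the exterior restriction so the images are distinct and all pass $v_t$ at $t$ (since $u_t<t-r$ and $w_t>t+r$ force $u_t+r<t<w_t-r$), and your grafting step $p(x^n,v)\geq p(x^n,\widehat y)$ is exactly the right use of optimality to keep the middle-weight comparison free of factors growing with $n$; the one-sided and degenerate boundary cases reduce as you say (note only that $u_t=1$ can also mean a word actually starting at time $1$, in which case the two-sided argument applies verbatim --- pure bookkeeping). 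As for what each approach buys: yours is elementary and self-contained, with fully explicit constants ($A=\ln K$, and $c$ explicit in $\e$, $M$, $p_{\min}$, $r$), whereas the ratio route of \cite{kuljus} avoids the crude $K^{L+1}$ path count and can therefore yield a sharper exponent (compare the $K$-free structure of the bounds in Proposition \ref{pos}), besides delivering the companion statements that the paper also relies on.
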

The bound in (\ref{uw}) depends on $x^n$, because $w_t$ and $u_t$
depend on $x^n$. If there is no word from  $\X^{r+1}$ in the
observation sequence $x^n$, then $w_t=n$ and $u_t=1$, so that the bound is
$c\exp[-A(n-1)]$, and as we already know such a bound trivially
holds. Hence, (\ref{uw}) clearly improves the trivial bound given in 
(\ref{classificationV}).
%
%
\paragraph{Stochastic bounds that are independent of $n$.}
Letting now the data $X^n$ be random, we get that $W_t$ and $U_t$
are random stopping times, and the bound in (\ref{uw}) can be
written as
\begin{equation}\label{UW}
\P(Y_t=v_t(X^n)|X^n)\geq c\exp[-A(W_t-U_t)].
\end{equation}
Let us study the distribution of the random variables $W_t-U_t$.
Obviously, $W_t-U_t \leq n-1$, and the distribution of $W_t-U_t$
depends on both $t$ and $n$. We would, however, like to have an
upper bound on $W_t-U_t$ that is independent of $n$ and, if possible, also
independent of $t$. Consider the observation process $X_1,X_2,\ldots$,
and let
$$W^*_t:=\min \{w>t+r: X^{w}_{w-r}\in \X^{r+1}\}.$$
Thus, $W_t=W_t^*\wedge n$, so that $W_t\leq W^*_t$. The random
variable $W^*_t$ is independent of $n$, and as the following
proposition shows, $W^*_t-t$ has exponential
tail that can be chosen independently of $t$.
\begin{proposition}\label{geom}
Assume {\bf A1}. There exist constants $a>0$ and $b>0$ such that for
any initial distribution $\pi$ and for any $t$,
\[\P(W^*_t-t>k)\leq a\exp[-bk],\quad k=0,1,2,\ldots.\]
\end{proposition}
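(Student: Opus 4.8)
The plan is to show that, starting from an arbitrary distribution of the hidden chain, within a bounded number of steps one observes a full $\X$-word (that is, $r+1$ consecutive observations in $\X$) with probability bounded below by a positive constant, and then to dominate $W^*_t-t$ by a geometric variable. The two structural facts that drive the argument are consequences of the cluster assumption: by property (iii), an observation in $\X$ can be emitted only from a state in $C$, while by property (i), every state $s\in C$ emits an observation in $\X$ with probability $P_s(\X)=\int_{\X}f_s\,d\mu\geq \e\,\mu(\X)>0$. Thus producing an $\X$-word of length $r+1$ amounts to keeping the chain inside $C$ for $r+1$ consecutive instants and having each of these states emit into $\X$.

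First I would fix a reference state $c^*\in C$ and a window length $L=n_0+r$, where $n_0$ is chosen as follows. Since $\mathbb P$ is irreducible and, by {\bf A1}, $Y$ is aperiodic, there exist $n_0$ and $\rho_0>0$ with $p^{(n_0)}_{s c^*}\geq \rho_0$ for every $s\in S$; this lower bound is uniform over the starting state and hence over any starting distribution. Next, because $R$ is primitive with $R^r>0$, the probability of remaining in $C$ for the $r$ transitions following $c^*$ is $\rho_1:=[R^r\mathbf 1]_{c^*}=\sum_{c\in C}(R^r)_{c^* c}>0$. Finally, conditioning on a path that stays in $C$ at the relevant $r+1$ instants and using the conditional independence of the emissions, the probability that all these $r+1$ observations land in $\X$ is at least $(\e\,\mu(\X))^{r+1}$. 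Multiplying the three bounds, I obtain that, from any distribution of the state entering a length-$L$ window, the probability of seeing a complete $\X$-word inside that window is at least
\[\delta:=\rho_0\,\rho_1\,(\e\,\mu(\X))^{r+1}>0,\]
a constant that does not depend on $t$, on $n$, or on the initial distribution $\pi$.

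To finish, I would partition the time axis after $t$ into disjoint blocks $I_m$ of length $L$ and call $I_m$ a success if it contains a complete $\X$-word. By the Markov property, conditioning successively on the $\sigma$-field generated up to the end of block $I_{m-1}$ (in particular on the state entering block $I_m$) and applying the uniform bound $\delta$ to the conditional law of that state, the probability that the first $m$ blocks all fail is at most $(1-\delta)^m$. Since a success in one of the first $\lfloor k/L\rfloor$ blocks forces $W^*_t\leq t+k$, it follows that $\P(W^*_t-t>k)\leq (1-\delta)^{\lfloor k/L\rfloor}$, which is of the asserted form $a\exp[-bk]$ with $b=-L^{-1}\ln(1-\delta)$ and a suitable constant $a$; the bound is uniform in $t$ and $\pi$ because $L$ and $\delta$ are. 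The main point requiring care is the derivation of the single-window constant $\delta$: one must stitch together reaching $c^*$, staying inside $C$ long enough via primitivity of $R$, and emitting into $\X$, all while keeping every estimate uniform over the state distribution entering the window. Once this uniform per-window lower bound is in hand, the geometric domination is routine.
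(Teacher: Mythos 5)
Your proposal is correct, but it proves the proposition by a genuinely different route than the paper. The paper enlarges the state space: it defines $U_t=(Y_t,I_{X_t}(\X))$ and the sliding-window chain $Z_t=(U_t,\ldots,U_{t+r})$, identifies $\{W_t^*\le t+k\}$ with the event that $Z$ hits a distinguished set $H$ of window-states, and then invokes Lemma 5.1 of Doob (1953) under Hypothesis (D) (automatic for finite chains) to convert a uniform positive lower bound on the hitting probability into geometric decay. The verification of Doob's hypothesis there is essentially your single-window estimate in disguise: the paper bounds the relevant probability below by $m^{r+1}\delta\eta^*$, where $m=\e\mu(\X)$ plays the role of your emission factor, $\delta$ comes from primitivity of $R$ (your $\rho_1$), and $\eta^*$ from irreducibility (your $\rho_0$). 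What your argument buys is self-containedness and explicit constants: a direct block/geometric-trials argument with a uniform per-window minorization $\delta=\rho_0\rho_1(\e\mu(\X))^{r+1}$, no auxiliary chains and no external lemma; the cross-block conditioning is legitimate exactly as you say, since $(Y_t,X_t)$ is Markov and the conditional law of the future given the full past (states and emissions) depends only on the current hidden state. The one point where your route demands something the paper's does not is the exact-time minorization $p^{(n_0)}_{sc^*}\ge\rho_0$ uniformly in $s$, which requires aperiodicity of $Y$; this does hold under {\bf A1} (as the paper remarks: $R^r>0$ forces $R^{r+1}>0$, giving return cycles of coprime lengths inside $C$, and irreducibility spreads aperiodicity to all of $S$), but it is worth stating explicitly. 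The paper avoids this by working with the monotone ``hit within $n$ steps'' probabilities, which need only irreducibility (with possibly state-dependent times $n_s$, uniformized by $n^*=\max_s n_s$), letting Doob's lemma do the uniform geometrization. Your bound $\P(W_t^*-t>k)\le(1-\delta)^{\lfloor k/L\rfloor}$ with $L=n_0+r$ matches the paper's conclusion $a\exp[-bk]$ with $a=\rho^{-r/\mu-1}$, $b=\mu^{-1}\ln(1/\rho)$, up to the values of the constants.
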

The proof is given in Appendix. Because of the proposition,
the following corollary holds.
\begin{corollary}\label{cor1}
Assume {\bf A1.} Then for any initial distribution, the following lower
bound holds:
\[\P(Y_t=v_t(X^n)|X^n)\geq c\exp[-At]Z_t. \]
Here $Z_t$ is a $\sigma(X_1,X_2,\ldots)$-measurable random variable
such that $-\ln Z_t$ has exponential tail independent of $t$, that is for some
positive constants $r$ and $d$ and for every $u>0$, $\P(-\ln Z_t>u)\leq r\exp[-du]$.
\end{corollary}
\begin{proof} From (\ref{UW}) it follows that
\begin{equation*}\label{UW2}
\P(Y_t=v_t(X^n)|X^n)\geq
c\exp[-A W^*_t]=c\exp[-At]\exp[-A(W^*_t-t)]=c\exp[-At]Z_t,
\end{equation*}
where $Z_t=\exp[-A(W^*_t-t)]$. Thus, $-\ln Z_t=A(W^*_t-t)$, and for
any $u>0$,
\begin{align*}
\P(-\ln Z_t>u)&=\P(W^*_t-t>A^{-1}u)=\P(W^*_t-t> \lfloor A^{-1}u
\rfloor)\leq a\exp[-b \lfloor A^{-1}u \rfloor]\\
&\leq  a\exp[-b( A^{-1}u-1)]=r\exp[-du],
\end{align*}
where $r:=a e^b$ and $d:=bA^{-1}$. \end{proof}
\paragraph{Stationary case.} Let now the initial distribution be
stationary. Then it is convenient to embed $X$ into a two-sided
stationary hidden Markov process $\{X_t\}_{t=-\infty}^{\infty}$.
Now, besides the stopping time $W_t^*$, we can also define the time  $U_t^*$ as follows:
$$U^*_t:=\max \{u<t-r:
X_{u}^{u+r}\in \X^{r+1}\}.$$ Thus, $U_t=U^*_t \vee 1$, so that
$U_t\geq U^*_t$. Proposition \ref{geom}, possibly with some other
constants, holds also for $t-U^*_t$.  Therefore, for any $t$,
the random variable $W^*_t-U_t^*$ has exponentially decreasing
tail:
\begin{equation*}
\P(W^*_t-U^*_t>k)=\P\big((W^*_t-t)+(t-U^*_t)>k\big)\leq
\P(W^*_t-t>{k\over 2})+\P(t-U^*_t>{k\over 2})\leq a_o e^{-b_o k},
\end{equation*}
where $a_o$ and $b_o$ are some positive constants. Thus, we have the
following lower bound.
\begin{corollary}\label{cor2}
Assume {\bf A1} and let the initial distribution $\pi$ be
stationary. Then
\begin{equation}\label{w}
\P(Y_t=v_t(X^n)|X^n)\geq Z_t,
\end{equation}
where $Z_t$, $t=1,\ldots,n$, are $\sigma(\{X_t\}_{t=-\infty}^{\infty})$-measurable identically
distributed random variables such that $-\ln Z_t$ has exponential tail, that is for some positive constants $r$ and $d$
and for every $u>0$, $\P(-\ln Z_t>u)\leq r\exp[-du]$. Hence, $E[-\ln Z_t]<\infty$.
\end{corollary}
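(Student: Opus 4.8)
The plan is to read $Z_t$ straight off the data-dependent bound of Lemma \ref{lemmauw}, replacing the truncated stopping times by their two-sided untruncated versions so as to strip away all dependence on $t$ and $n$, and then to transport the exponential tail of $W_t^*-U_t^*$ (already established in the display preceding the statement) over to $-\ln Z_t$.

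First I would discard the truncation. In the stationary two-sided embedding one has $W_t=W_t^*\wedge n$ and $U_t=U_t^*\vee 1$, so that $W_t\le W_t^*$ and $U_t\ge U_t^*$, whence $W_t-U_t\le W_t^*-U_t^*$. Feeding this into the random-data form (\ref{UW}) of Lemma \ref{lemmauw}, and using that $x\mapsto\exp[-Ax]$ is decreasing since $A>0$, gives
$$\P(Y_t=v_t(X^n)|X^n)\ge c\exp[-A(W_t-U_t)]\ge c\exp[-A(W_t^*-U_t^*)].$$
I would then define $Z_t:=c\exp[-A(W_t^*-U_t^*)]$, which is exactly the asserted bound (\ref{w}). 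Because $W_t^*$ and $U_t^*$ are determined by the whole two-sided sequence, $Z_t$ is $\sigma(\{X_t\}_{t=-\infty}^{\infty})$-measurable; and because that sequence is stationary, a shift by $t$ shows that the joint law of $(W_t^*-t,\,t-U_t^*)$ does not depend on $t$, so that $W_t^*-U_t^*$, and hence $Z_t$, is identically distributed in $t$.

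It remains to check that $-\ln Z_t=A(W_t^*-U_t^*)-\ln c$ has an exponential tail, which is the same computation as in the proof of Corollary \ref{cor1}: for $u>0$ I would write $\P(-\ln Z_t>u)=\P\big(W_t^*-U_t^*>(u+\ln c)/A\big)$, pass to the integer floor of the threshold, and apply the bound $\P(W_t^*-U_t^*>k)\le a_o e^{-b_o k}$, obtaining $\P(-\ln Z_t>u)\le r\exp[-du]$ with $d=b_o/A$ and $r$ taken large enough to absorb the constant $\ln c$ together with the range of small $u$. Finally, since $W_t^*-U_t^*$ is bounded below by a positive constant, $-\ln Z_t$ is bounded below, and an exponential upper tail then forces $E[-\ln Z_t]<\infty$.

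The genuinely substantive ingredient --- that $t-U_t^*$ decays exponentially, matching $W_t^*-t$ from Proposition \ref{geom} and combining with it by a union bound --- is already recorded in the display above, so the only work left is bookkeeping. The point I would be most careful about is the identical-distribution claim: it relies on inserting the untruncated $W_t^*,U_t^*$ (not $W_t,U_t$, which still see the boundaries $1$ and $n$) into $Z_t$, which is precisely why discarding the truncation in the first step is essential rather than cosmetic.
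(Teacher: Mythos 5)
Your proposal is correct and takes essentially the same approach as the paper: the paper's proof likewise defines $Z_t:=c\exp[-A(W_t^*-U_t^*)]$, obtains (\ref{w}) from (\ref{UW}) via $W_t\leq W_t^*$ and $U_t\geq U_t^*$, invokes stationarity of the two-sided embedding for the identical-distribution claim, and refers the tail estimate back to the computation in Corollary \ref{cor1} together with the union bound $\P(W_t^*-U_t^*>k)\leq a_o e^{-b_o k}$ established just before the statement. Your write-up simply makes explicit the bookkeeping (measurability, absorption of $\ln c$ into $r$, finiteness of $E[-\ln Z_t]$) that the paper leaves implicit.
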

\begin{proof} From (\ref{UW}) it follows that
$$\P(Y_t=v_t(X^n)|X^n)\geq c\exp[-A(W^*_t-U^*_t)]=:Z_t.$$
By stationarity, the random variables $Z_t$ are identically
distributed.  The rest of the proof is the same as the one of
Corollary \ref{cor1}.\end{proof}\\\\
%
%
The accuracy of the Viterbi alignment, that is the expected number of correctly
classified states given $X^n$, is for a stationary chain according to (\ref{w}) bounded below by
$\sum_{t=1}^n Z_t$. Therefore, for a stationary chain, we can with help of Corollary \ref{cor2}
find an upper bound for the probability that the accuracy is
less than $a_n$, where $a_n<n$. Let $M=E[-\ln Z_t]<\infty$. Then
\[ \P\Big(\sum_{t=1}^n Z_t \leq a_n\Big)= \P\Big(-\ln \big({1\over n}\sum_{t=1}^n Z_t \big) \geq -\ln {a_n\over n}\Big) \]
\[ \stackrel{\mbox{\textit{(Jensen)}}}{\leq} \P\Big({1\over n} \sum_{t=1}^n \big(-\ln Z_t \big) \geq \ln {n\over a_n}\Big)
 \stackrel{\mbox{\textit{(Markov)}}}{\leq} { E\big[-\ln Z_t \big] \over \ln{({n\over a_n})}}={M\over \ln{({n\over a_n})} }.\]
%
\section{Iterative algorithm}\label{sec:algorithm}
Recall that we aim to improve the accuracy of the Viterbi alignment.
Since the accuracy is just the sum of classification probabilities,
the straightforward idea for doing this is to find the time points with lowest
classification probabilities, replace them by the PMAP-states (or by the
true states when peeping is possible), and replace the original Viterbi
alignment by the restricted Viterbi alignment. As explained in the
introduction, such a bunch approach has a big drawback, since
typically the time points with low classification probabilities are situated 
next to each other. Therefore, substituting a number of consecutive states with the corresponding PMAP-states can make the
adjusted path inadmissible. The following iterative algorithm
ensures that the adjusted alignment remains admissible. 
%
\subsection{Description of the iterative algorithm}
\begin{description}
  \item[Input:] observations $x^n$, a threshold parameter $\delta$, $0<\delta<{1\over K}$, and the maximum number of iterations $M$.
  \item[Initialization:] find the Viterbi alignment $v(x^n)$ and calculate the  classification probabilities
  $$\rho_t^{(0)}:=\P(Y_t= v_t|X^n=x^n),\quad t=1,\ldots, n.$$ Define $v^*:=v$.

\item[For $m=1,\ldots,M$ do:]
 if $\min_t \rho_t^{(m-1)}\geq \delta$, then quit, else 
\begin{itemize}
  \item[1)] find the time point $t_m$ with lowest conditional classification
  probability and the state $w_m$ that maximizes the corresponding conditional classification probability:
  $$t_m:=\arg\min\{\rho_t^{(m-1)}:t=1,\ldots,n\},$$
  $$w_{m}:=\arg\max_{s\in S}\P(Y_{t_m}=s|X^n=x^n;
  Y_{t_i}=w_{i},i=1,\ldots,m-1);$$
  \item[2)] let $S^n(m):=\{s^n\in S^n: s_{t_1}=w_1,\ldots,s_{t_m}=w_{m}\}$, find the new restricted Viterbi path $v^{(m)}$,
   \[v^{(m)}:=\arg\max_{s^n\in S^n(m)}\P(Y^n=s^n|X^n=x^n) \] 
   \[=\arg\max_{s^n}\P(Y^n=s^n|X^n=x^n;Y_{t_i}=w_i,i=1,\ldots,m),\]
   define $v^*:=v^{(m)}$;
   \item[3)] calculate the new conditional classification probabilities $\rho_t^{(m)}$,
   \begin{equation} \label{restr_classprob}
   \rho_t^{(m)}:=\P(Y_t=  v^{(m)}_t|X^n=x^n;Y_{t_i}=w_i,i=1,\ldots,m),\quad t=1,\ldots, n. \end{equation}
\end{itemize}
\item[Output:] the alignment $v^*(x^n)$.
\end{description}
In the algorithm described above, thus, at first the time $t_1$ 
with the lowest classification probability is found. Then, at
this point, the state $w_1$ with maximum posterior probability -- the
PMAP state -- is found. The state $w_1$ at time point $t_1$ is taken
as it were the true state, and in all what follows, only the paths
passing $w_1$ at $t_1$ are considered. The conditional
classification probabilities in the next step are computed given
the event $\{Y_{t_1}=w_1\}$. The time $t_2$ has the smallest
conditional classification probability and the state $w_2$ is the
state that at $t_2$ has the maximum posterior probability given
$\{Y_{t_1}=w_1\}$. This means that the probability
$\P(Y_{t_1}=w_1,Y_{t_2}=w_2|X^n=x^n)$ is strictly positive, thus 
the algorithm guarantees that the alignment remains admissible, 
i.e.~it has positive posterior probability. In what follows, the states
$w_1$ and $w_2$ at time points $t_1$ and $t_2$ are taken as they were true
states, and  all probabilities are calculated conditional on 
$\{Y_{t_1}=w_1,Y_{t_2}=w_2\}$. The output $v^*$
has then always positive posterior probability that decreases as $m$
increases, because at every step of
iterations, an additional constraint is imposed.\\\\
As explained in the introduction, another problem with the 
bunch approach is that replacing the states with low classification
probability by the PMAP-states can change the alignment, so that the
classification probabilities of the restricted Viterbi alignment can drop
below the threshold somewhere else. As the example in the next
subsection shows, this can indeed be the case. The iterative
algorithm does not necessarily exclude such possibility, but we
have a reason to believe that such a phenomenon is less likely to
happen. The reasoning is as follows. As is shown in
\cite{IEEE,AVT4,K2}, (under some conditions) the influence of
changing the Viterbi alignment is local. This means that (with high
probability) there exist time points $1=u_0<u_1<u_2<\cdots<u_k=n$, so that
if $t\in (u_{j-1},u_j)$, then forcing the alignment to pass a
prescribed state at time $t$ changes the Viterbi alignment in the
range $(u_{j-1},u_j)$ only (see also \cite{intech}). Thus, the
influence of adjusting the alignment at $t$ is local. Suppose now
that at some $t\in (u_{j-1},u_j)$, the classification probability
$\P(Y_t= v_t|X^n=x^n)$ is very low. Then as explained before, 
the classification probability is most likely low also for the
neighbors, meaning that the behaviour of the Viterbi alignment in 
$(u_{j-1},u_j)$ is atypical, so the piece $(u_{j-1},u_j)$ is somehow abnormal.
Changing the alignment at $t$ changes it also in the
neighborhood of $t$, but not outside of the piece $(u_{j-1},u_j)$. It is
meaningful to believe that the abnormal piece is now adjusted, so
that the classification probabilities of the adjusted alignment
$v^{(1)}$ are bigger not only at $t$ but also in the neighborhood.
This is the reason why the iterative algorithm achieves the same
effect as the bunch approach with a considerably smaller number of replacements.
If there is now another abnormal piece $(u_{l-1},u_l)$ ($l\ne j$),
then the previous changes do not influence the Viterbi alignment in
that piece, so that at some $t_2\in (u_{l-1},u_l)$, the
(unconditional) classification probability of $v^{(1)}$ is still
atypically low. The question is whether the algorithm still finds $t_2$, since it uses the
conditional (given $\{Y_{t_1}=w_1\}$) smoothing probabilities.
However, for many models the smoothing probabilities $\P(Y_t=s|X^n)$
have the so-called exponential forgetting probabilities
\cite{PMAP,PMAP2,ungarlased}, so that for some constant $0<\rho_o<1$, for a 
non-negative finite random variable $C$ and for any state $s$, 
$$\Big|\P(Y_{t_2} = s|X^n=x^n)-\P(Y_{t_2} = s|X^n=x^n,Y_{t_1}=w_1)\big|\leq C\rho_o^{|t_1-t_2|}.$$ 
This inequality implies that when $t_1$ and $t_2$ are sufficiently far
from each other, then the conditioning on $\{Y_{t_1}=w_1\}$ does not influence much the
 classification probability at $t_2$, and the algorithm finds the
 next abnormal piece. For a similar result, see Corollary 2.1 in \cite{PMAP}.
 \\\\
 If peeping is possible, then instead of revealing a bunch of true states at once, one can also perform peeping iteratively. 
 Although (computationally) more costly, the iterative way of
 adjusting the Viterbi alignment has several advantages over the bunch approach. The iterative algorithm tends to adjust the Viterbi
 alignment piecewise. Since the number of abnormal pieces is usually smaller than the number of time points with low
 classification probability, the number of replacements (iterations) needed to reach a certain effect is considerably smaller for the iterative approach 
 compared to the bunch approach.
\subsection{Comparison of the bunch and iterative approach}
\subsubsection{A case study}\label{subsec:case}
In this example, we consider a model that is used in \cite{seg} for illustrating the task of predicting protein secondary structure in single amino-acid sequences. 
The underlying Markov chain has six possible states. The transition matrix and initial distribution are 
as follows: 
\[\mathbb P = \left( \begin{array}{cccccc}
      0.8360 & 0.0034 & 0.1606 & 0 & 0 & 0 \\
      0.0022 & 0.8282 & 0.1668 & 0.0028 & 0 & 0 \\
      0.0175 & 0.0763 & 0.8607 & 0.0455 & 0 & 0 \\
      0 & 0 & 0 & 0.7500 & 0.2271 & 0.0229 \\
      0 & 0 & 0 & 0 & 0.8450 & 0.1550 \\
      0 & 0.0018 & 0.2481 & 0 & 0 & 0.7501 \\
      \end{array} \right) , \]
\[\quad \pi= (0.0016,0.0041,0.9929,0.0014,0,0)' \, . \]
Many transitions are impossible and this can make a PMAP-sequence 
inadmissible. The observations come from a 20-symbol
emission alphabet of amino-acids, the emission matrix is given in
Appendix. In order to compare the bunch approach and the iterative approach, we have
generated an observation sequence (together with the underlying Markov chain) of length $n=1000$ from this model. 
We shall compare the two approaches for both PMAP-replacements and peeping.
\\\\
To compare the behaviour of the bunch and iterative algorithm, we provide for both
algorithms a table with some summary characteristics that have been
calculated for different number of replacements or iterations $m$,
respectively. The simulation results are given in Tables 1 -- 4. In these tables, 
\textit{Errors} denotes the real number of classification errors and \textit{E(Errors)} the expected number of classification errors, 
$\rho_{min}^{uncond}:=\min_t \P(Y_t=v_t^{(m)}|X^n=x^n)$ and $\rho_{min}^{cond}:=\min_t \rho_t^{(m)}$ (see (\ref{restr_classprob}))
give respectively the minimum unconditional and conditional classification probability for the restricted alignment
after $m$ replacements/iterations, and \textit{Log-likelihood} gives the logarithm of the posterior probability of the restricted alignment.   
Observe that \textit{Errors} depends on the realization of the underlying
hidden Markov chain. The PMAP-alignment of the generated sequence has 467
classification errors  and it is inadmissible, i.e.~its posterior
probability is zero. The Viterbi alignment has 481 classification
errors.
\\\\
Suppose that the threshold parameter $\delta$ is set to $0.1$. There are $140$ classification probabilities smaller than $0.1$ for the Viterbi alignment of this sequence.
Using the bunch algorithm would mean that we substitute the states corresponding to these 140 low probabilities with the respective
PMAP-states, and find then the restricted Viterbi path. From Table 1 we can see
that the likelihood of the restricted path is zero. The likelihood of the restricted Viterbi will be zero after 78 replacements.
This depends on replacement of many consecutive states: all the states from time point 712 to 754, except at 728, are substituted,
whereas from 753 to 754 we obtain an inadmissible transition $3\rightarrow 5$. If we would use the iterative algorithm
with the same threshold instead, we would stop after 18 iterations because $\min_t \rho_t^{(18)}=0.1094$. The number of classification errors
for the restricted alignments obtained with the bunch algorithm (140 replacements) and iterative algorithm (18 iterations) are
486 and 485, respectively. The 11 lowest unconditional classification probabilities for the restricted alignments are:
\begin{itemize}
\item[1)] Bunch 0.0448, 0.0449, 0.0474, 0.0506, 0.0558, 0.0655, 0.0671, 0.0771, 0.0880, 0.0944, 0.1018;
\item[2)] Iterative 0.1094, 0.1149, 0.1184, 0.1227, 0.1247, 0.1276, 0.1305, 0.1383, 0.1426, 0.1428, 0.1460.
\end{itemize}
We can see that in the case of bunch algorithm, after fixing the
preliminary set of 140 time points, the classification probability
has dropped below $\delta$ for ten time points. For the iterative
algorithm, all the probabilities are above the threshold. 
\\\\
Recall that the unrestricted Viterbi alignment has 481 classification errors.
In Tables 1--2 we can see how the number of
classification errors decreases at first with increasing number of replacements/iterations, but then it starts
to increase again.  The minimum number of classification errors for the restricted alignments is 428.
The iterative algorithm reaches this number after four iterations. To obtain the same error rate with the bunch
algorithm, we need to make 37 replacements. The likelihood of the restricted Viterbi alignment after four iterations
is higher compared to the likelihood of the restricted sequence obtained after 37 substitutions with the bunch algorithm
(log-likelihoods are $-171.28$ and $-172.55$, respectively). This shows that the iterative algorithm is more effective since it works piecewise.
If we would use the bunch algorithm with four replacements,
the replacements would occur at time points 723, 724, 725 and 733, which give the four lowest classification probabilities.
This means that we would make adjustments at three consecutive time points. With the iterative algorithm,
the substitutions would be made at 723, 752, 582 and 557, i.e.~the problematic pieces are fixed in turn. With the iterative algorithm, the available
information for making adjustments is used more efficiently.
\\\\
Observe that  E(Errors) is just $n$ minus the accuracy. For the 
Viterbi alignment this number is 544. The best possible expected
number of errors, which corresponds to the PMAP-alignment, is 459. Again,
to reach a certain decrease in the expected number of errors, 
the iterative algorithm needs a smaller number of replacements than the bunch algorithm. 
After ten replacements/iterations for example, E(Errors) is 522 (bunch) and 501 (iterative). 
To achieve ${\rm{E(Errors)}}=497$, 15 iterations are needed, whereas the bunch algorithm requires about 70 replacements. 
The decrease from 544 to 497 might not seem that big, but one should take into consideration that the maximum 
possible improvement is $544-459=85$. Hence, the improvement $544-497=47$ that the 
iterative algorithm achieves with 15 replacements, is more than half of the possible improvement. 
\vskip 1\baselineskip\noindent
{{\bf{Table 1}}. PMAP-replacements with the bunch algorithm.}
\begin{center} 
\begin{tabular}{|r|cccc|}
  \hline
  $m$ & Errors & E(Errors) &  $\rho_{min}^{uncond}$ & Log-likelihood \\
  \hline
   1  &  452  & 528 &     0.0279   &     -168.58 \\
   2  &  452  & 528 &     0.0279   &     -168.58 \\
   3  &  452  & 528 &     0.0279   &     -168.58 \\
   4  &  452  & 528 &     0.0279   &     -168.58 \\
   5  &  452  & 528 &     0.0279   &     -168.58 \\
  10  &  449  & 522 &     0.0437   &     -169.44 \\
  15  &  445  & 522 &     0.0437   &     -172.13 \\
  20  &  445  & 522 &     0.0437   &     -172.13 \\
  25  &  445  & 522 &     0.0437   &     -172.18 \\
  30  &  445  & 522 &     0.0437   &     -172.18 \\
  35  &  433  & 519 &     0.0448   &     -172.50 \\
  37  &  428  & 517 &     0.0448   &     -172.55 \\
  40  &  429  & 516 &     0.0448   &     -172.80 \\
  50  &  455  & 508 &     0.0448   &     -175.39 \\
  60  &  461  & 505 &     0.0448   &     -177.65 \\
  70  &  487  & 496 &     0.0448   &     -177.89 \\
  77  &  483  & 494 &     0.0448   &     -178.90 \\
  78  &  483  & 494 &     0.0448   &      $ -\infty$ \\
  140 &  486  & 488 &     0.0448   &      $-\infty$ \\
  \hline
\end{tabular}
\end{center}
{{\bf{Table 2}}. PMAP-replacements with the iterative algorithm.}
\begin{center}
\begin{tabular}{|r|ccccc|}
  \hline
  $m$ & Errors & E(Errors) & $\rho_{min}^{cond}$  & $\rho_{min}^{uncond}$ & Log-likelihood \\
  \hline
    1  &     452  & 528 &  0.0279   &   0.0279   &     -168.58 \\
    2  &     451  & 523 &  0.0437   &   0.0437   &     -169.37 \\
    3  &     439  & 520 &  0.0439   &   0.0448   &     -169.69 \\
    4  &     428  & 515 &  0.0103   &   0.0458   &     -171.28 \\
    5  &     433  & 512 &  0.0453   &   0.0458   &     -172.64 \\
   10  &     452  & 501 &  0.0451   &   0.0576   &     -176.19 \\
   15  &     458  & 497 &  0.0459   &   0.0608   &     -179.16 \\
   18  &     485  & 487 &  0.1094   &   0.1094   &     -181.85 \\
   77  &     498  & 481 &  0.2779   &   0.0947   &     -215.19 \\
   78  &     502  & 481 &  0.3105   &   0.0947   &     -215.38 \\
  \hline
\end{tabular}
\end{center}
Tables 3 and 4 compare the bunch and the iterative approach in the case of 
peeping. In this case, we take into
account the additional information obtained when revealing states. Thus, E(Errors)
is calculated with help of conditional classification probabilities:
\begin{equation} \label{condE}
{\rm {E(Errors)}}=n-\sum_{t=1}^n \P(Y_t= v^{(m)}_t |X^n=x^n,Y_{t_1}=y_{t_1},\ldots,Y_{t_m}=y_{t_m}).
\end{equation}
Again, the iterative algorithm is more efficient than the 
bunch algorithm. After 78 replacements with the bunch approach, the minimum (conditional) classification probability 
for the restricted sequence is still 0.0452. For iterative peeping, this probability is 0.1256 after 
10 iterations. The first replacement has a big positive effect: the
number of errors decreases from 481 to 452 (apparently a whole piece
is corrected). But the subsequent replacements with the bunch method have either a negative effect (causing thus additional
errors) or give an additional decrease in the number of errors that is generally smaller than the
number of replacements $m$. As Table 4 shows, adjusting the
alignment iteratively is much more efficient in this sense, since $m$ additional
replacements after the first one decrease the number of errors by more than $m$. 
The number of errors for $m=3$ and $m=4$ in Table 4 shows that iterative peeping can also 
have a negative effect. We can also study the effect of the iterative approach when states are substituted 
with the corresponding PMAP-states or true states. Table 2 and Table 4 show that after 15 iterations for example, 
the restricted sequence has 458 and 395 errors when replacements are done with the PMAP-states or true states, respectively.  
%
Note that E(Errors) might increase with $m$ (see Table 3). We shall address this 
issue more closely in Section \ref{sec:peep}.
\newline \newline
{{\bf{Table 3}}. Peeping with the bunch algorithm.}
\begin{center}
\begin{tabular}{|r|cccc|}
  \hline
  $m$ & Errors & E(Errors) & $\rho_{min}^{cond}$  & Log-likelihood \\
  \hline
    1 & 452 & 527 & 0.0279 &  -168.58 \\
    2 & 485 & 516 & 0.0319 &  -170.52 \\
    3 & 485 & 515 & 0.0319 &  -170.52 \\
    4 & 451 & 527 & 0.0238 &  -175.24 \\
    5 & 450 & 524 & 0.0238 &  -175.29 \\
   10 & 442 & 512 & 0.0437 &  -181.32 \\
   15 & 436 & 506 & 0.0437 &  -187.53 \\
   20 & 435 & 505 & 0.0437 &  -187.92 \\
   25 & 429 & 503 & 0.0437 &  -189.43 \\
   30 & 429 & 501 & 0.0437 &  -189.43 \\
   35 & 416 & 491 & 0.0439 &  -190.28 \\
   37 & 423 & 484 & 0.0439 &  -191.26 \\
   40 & 423 & 483 & 0.0394 &  -191.26 \\
   50 & 415 & 457 & 0.0447 &  -192.12 \\
   60 & 408 & 450 & 0.0452 &  -193.28 \\
   70 & 406 & 435 & 0.0452 &  -193.50 \\
   77 & 404 & 429 & 0.0452 &  -194.61 \\
   78 & 404 & 429 & 0.0452 &  -194.61 \\
  140 & 369 & 383 & 0.1094 &  -215.54 \\
\hline
\end{tabular}
\end{center}
%
%
\newpage \noindent
{{\bf{Table 4}}. Peeping with the iterative algorithm.}
\begin{center}
\begin{tabular}{|r|cccc|}
  \hline
  $m$ & Errors & E(Errors) & $\rho_{min}^{cond}$  &  Log-likelihood \\
  \hline
    1 &  452 & 527 & 0.0279 &  -168.58 \\
    2 &  448 & 514 & 0.0437 &  -170.36 \\
    3 &  436 & 506 & 0.0439 &  -170.69 \\
    4 &  423 & 495 & 0.0458 &  -173.18 \\
    5 &  430 & 488 & 0.0484 &  -174.16 \\
   10 &  422 & 445 & 0.1256 &  -179.06 \\
   15 &  395 & 429 & 0.1152 &  -183.16 \\
   18 &  393 & 414 & 0.1435 &  -183.33 \\
   77 &  299 & 299 & 0.3146 &  -228.85 \\
   78 &  298 & 298 & 0.2970 &  -228.97 \\
  \hline
\end{tabular}
\end{center}
\subsubsection{Threshold-based adjustments}
In this example, we consider the following two-state hidden Markov
model. The transition matrix and initial probabilities are given by
\[ \mathbb P = \left( \begin{array}{cc}
      0.9 & 0.1 \\
      0.1 & 0.9 \\
      \end{array} \right) ,  \quad \quad  \pi'=(0.5,0.5) ,\]
and the emission distributions are given by $\mathcal N(0,1)$ and
$\mathcal N (0.5,1)$. We have generated 100 observation sequences of
length $n=1000$ from this HMM and studied the mean behavior of the
restricted Viterbi sequences for different threshold parameters $\delta$. We
study  threshold-based adjustments. For the bunch approach this means that for all the time points with
lower classification probability than a given $\delta$, the Viterbi
state is substituted with the corresponding PMAP-state (or in the case of peeping with the true state), and
thereafter restricted segmentation is performed. In the case of
iterative algorithm, replacements are based on conditional
classification probabilities and performed iteratively. For every
restricted alignment, we calculate the real number of classification
errors, the expected number of classification errors, the minimum
conditional and unconditional classification probability, and the
log-likelihood of the restricted Viterbi path. The mean values of
these characteristics over the hundred replicates for the
unrestricted Viterbi are as follows: 350, 354, 0.15 and -105.8. The
average values of the characteristics for the restricted sequences
are given in Tables 5 -- 8. The average number of
substitutions made and its standard deviation can be seen in columns 
\textit{Replacements} and \textit{Iterations} for the bunch and
iterative algorithm, respectively. The average number of PMAP-errors
for the studied sequences is 306. \vskip 1\baselineskip \noindent
{{\bf{Table 5}}. PMAP-replacements: mean behavior of the restricted alignments for the
bunch algorithm.}
\begin{center}
\begin{tabular}{|c|ccccc|}
  \hline
  $\delta$ & Replacements & Errors & E(Errors) & $\rho_{min}^{uncond}$ &   Log-likelihood \\
  \hline
    0.20  &   7.50 (5.0)  &  341  &   344  &   0.19 &    -107.4 \\
    0.25  &   19.7 (9.4)  &  338  &   340  &   0.18 &    -109.7 \\
    0.30  &   39.1 (14.6) &  340  &   340  &   0.16 &    -112.6 \\
  \hline
\end{tabular}
\end{center}
\vskip 1\baselineskip \noindent
{{\bf{Table 6}}. PMAP-replacements: mean behaviour of the restricted alignments for the
iterative algorithm.}
\begin{center}
\begin{tabular}{|c|cccccc|}
  \hline
  $\delta$ & Iterations & Errors & E(Errors) & $\rho_{min}^{uncond}$ &  $\rho_{min}^{cond}$  &  Log-likelihood \\
  \hline
  0.20   &    3.3 (2.0)  &  336  &  341  &  0.22  &   0.22  & -107.7 \\
  0.25   &    7.4 (3.4)  &  327  &  330  &  0.26  &   0.26  & -110.8 \\
  0.30   &    13.9 (4.9) &  321  &  321  &  0.31  &   0.31  & -115.6 \\
  \hline
\end{tabular}
\end{center}
Compare the bunch and iterative algorithm for $\delta=0.25$, for
example. On average, there are 20 classification probabilities lower
than 0.25. After substituting the states with low classification
probability according to the bunch algorithm, the average minimum
classification probability for the restricted Viterbi alignments is
0.18 and the average number of classification errors is 338. For the
iterative algorithm with the same threshold, we need 7 iterations on
average. The average minimum classification probability for the
restricted alignments is 0.26, which is above the threshold, and the
average number of classification errors is 327. This demonstrates
that the iterative algorithm is more efficient.
\\\\
In the same way, we can compare the threshold-based adjustment
procedure for the bunch and iterative algorithm in the case of
peeping. To take into account the information obtained through
revealing states, we consider conditional probabilities when
calculating the classification probabilities and the expected number
of classification errors for the restricted Viterbi alignments.
\vskip 1\baselineskip \noindent
{{\bf{Table 7}}. Peeping: mean behavior of the restricted alignments
for the bunch algorithm.}
\begin{center}
\begin{tabular}{|c|ccccc|}
  \hline
  $\delta$ & Replacements & Errors & E(Errors) & $\rho_{min}^{cond}$  &  Log-likelihood \\
  \hline
   0.20  &  7.5 (5.0)   &   335  & 339  &  0.19  & -107.8 \\
   0.25  &  19.7 (9.4)  &   324  & 325  &  0.19  & -111.5 \\
   0.30  &  39.1 (14.6) &   307  & 310  &  0.18  & -117.1 \\
  \hline
\end{tabular}
\end{center}
\vskip 1\baselineskip \noindent
{{\bf{Table 8}}. Peeping: mean behavior of the restricted alignments
for the iterative algorithm.}
\begin{center}
\begin{tabular}{|c|ccccc|}
  \hline
  $\delta$ & Iterations & Errors & E(Errors) & $\rho_{min}^{cond}$  &  Log-likelihood \\
  \hline
    0.20  &   3.2 (2.0)  &  333  &  337  &  0.22  &  -107.3 \\
    0.25  &   6.9 (3.2)  &  319  &  322  &  0.26  &  -109.5 \\
    0.30  &   12.2 (4.1) &  304  &  306  &  0.31  &  -112.8 \\
  \hline
\end{tabular}
\end{center}
\vskip 1\baselineskip \noindent
Consider again $\delta=0.25$. When using the bunch algorithm, we
would need to peep at 20 time points on average, whereas with the
iterative algorithm the average number of peepings would be 7. For
the bunch algorithm, the mean minimum classification probability for
the restricted sequences is 0.19, which is below the threshold, and
the average number of errors is 324. The same characteristics in the
case of iterative peeping are 0.26 and 319, respectively.
\section{Unsuccessful peeping}\label{sec:peep}
Recall Table 3. With bunch peeping, the number of expected errors E(Errors) for $m=4$
is much bigger than for $m=3$ (527 and 515, respectively). This
means that peeping at four points is much worse than peeping at three
points -- an additional peeping at $t_4$ has a 
negative effect. However, according to (\ref{condE}), E(Errors) when
$m$ hidden states are revealed is conditional on $x^n$ as well as on $y_{t_1},\ldots y_{t_m}$,
implying that the negative effect we see in this example might be due to ``bad'' value of
$Y_{t_4}$ that in our simulations happens to be very untypical. When
taking the expectation over $Y_{t_4}$, the average effect can still be
positive, because the untypical value has very little probability
and for the rest of the values everything is normal. This
speculation arises the following question: is it possible to peep at
some fixed time point, say $t_1$, so that E(Errors) increases also
when averaging over $Y_{t_1}$? Formally, the question is the
following: do there exist an HMM, a sequence of observations $x^n$
having a positive likelihood, and a fixed time point $t_1$ such that the
following inequality holds:
\begin{equation}\label{badpeep}
\sum_{t=1}^n \P(Y_t=v_t(x^n)|X^n=x^n)>\sum_{t=1}^n \P(Y_t=v^{(1)}_t(x^n,Y_{t_1})|X^n=x^n)?
\end{equation}
Here $v^{(1)}$, as previously, stands for the restricted Viterbi
alignment given the value of $Y_{t_1}$. Inequality (\ref{badpeep})
states that the accuracy of the unrestricted Viterbi alignment is
strictly bigger than that of the restricted Viterbi alignment after
peeping $Y_{t_1}$. In what follows,  we present an example showing
that such an {\it unsuccessful peeping} is possible and (\ref{badpeep})
can happen.
\paragraph{The  model and observations.} Consider a 3-state HMM with the transition matrix
 $$\mathbb{P}=\left(
  \begin{array}{ccc}
    {2\over 3}(1-\epsilon) & {2\over 3}\epsilon  & {1\over 3} \\
    {2\over 3}\epsilon  &  {2\over 3}(1-\epsilon) &  {1\over 3} \\
    {1\over 2} & 0 & {1\over 2} \\
  \end{array}
\right),$$ where $0<\epsilon<{1 \over 2}$, implying that ${2\over
3}(1-\epsilon)>{1\over 2}$. Let the initial distribution be
stationary, i.e.
$$\pi_1={3\over 5}\Big({1+2\epsilon\over 1+4\epsilon}\Big),\quad
\pi_2={6\over 5}\Big({\epsilon\over 1+4\epsilon}\Big),\quad \pi_3={2\over 5}.$$
Let $\delta>0$ be so small that
\begin{equation}\label{delta}
(1+\delta) \epsilon<(1-\epsilon)
\end{equation}
and let  $m\in \mathbb{N}$ be big (will be specified later).
Suppose $x,y,z,a\in {\cal X}$ are such that
\begin{enumerate}
  \item[1)] $f_{1}(x)=1$ and $f_2(x)=f_3(x)=0$;
  \item[2)] $f_2(y)=1+\delta$ and $f_1(y)=f_3(y)=1$;
  \item[3)] $f_3(a)=0$, $f_1(a)=f_2(a)=1$;
  \item[4)] $f_1(z)=f_2(z)=f_3(z)=1$.
\end{enumerate}
Let the observations $x_1,\ldots,x_n$ be as follows: $n=m+2$ and
$$x_1=x,\quad x_2=y,\quad x_3=x_4=\cdots = x_m=z,\quad x_{m+1}=a,\quad x_{m+2}=x.$$
\paragraph{Viterbi alignment.}
By condition 1), all the state paths with positive posterior
probability begin and end in state 1. From (\ref{delta}) it follows
that
$$\Big({2\over 3}(1-\epsilon)\Big)^{m+1}>\Big({2\over
3}(1-\epsilon)\Big)^{m-1}\Big({2\over 3}\epsilon\Big)^{2}(1+\delta),$$
implying that
\begin{align*}
&\P\Big(Y_{1}=\cdots
=Y_{n}=1\Big|X^n=x^n\Big)
>\P\Big(Y_1=1,Y_2=
\cdots=Y_{n-1}=2,Y_n=1|X^n=x^n\Big).\end{align*} From
${2\over 3}(1-\epsilon)>{1\over 2}$ it follows that the posterior
probability to remain in state 1 is bigger than jumping from state 1 to state 3, remaining then there
and jumping thereafter back to state 1. Formally, for any $1\leq k<l<m+1$,
\begin{align*}
&\P(Y_{1}=\cdots =Y_{n}=1\big|X^n=x^n)\\
&>\P\big(Y_{1}=\cdots= Y_k=1,Y_{k+1}=\cdots =Y_l=3,Y_{l+1}=\cdots =Y_{n}=1\big|X^n=x^n\big).\end{align*}
This means that the Viterbi alignment remains in state 1 all the
time.
\paragraph{Restricted Viterbi alignment.} We now take
$t_1:=m+1=n-1$. Thus, we will peep the value of $Y_{n-1}$.
Since by 3), $\P(Y_{n-1}=3|X^n=x^n)=0$, the restricted Viterbi alignment
will differ from the original one only if $Y_{n-1}=2$. Let us find
the restricted Viterbi alignment given it passes state $2$ at time
$n-1$, i.e. let us find
\begin{align*}
v^{(1)}(x^n,2)&=\arg\max_{s^n}\P(Y^n=s^n|X^n=x^n,Y_{n-1}=2)\\
&=\arg\max_{s^n:s_{n-1}=2}\P(Y^n=s^n|X^n=x^n).
\end{align*}
Because of condition 2) it follows that for any $k>2$,
\begin{align*}
&\P(Y_1=1,Y_2=\cdots=Y_{n-1}=2,Y_n=1|X^n=x^n)\\
&>\P(Y_1=\cdots =Y_{k-1}=1,Y_k=\cdots=Y_{n-1}=2,Y_n=1|X^n=x^n).
\end{align*}
Secondly, since the only way from state 3 to state 2 is through
state 1, the restricted Viterbi path never visits state 3.
Therefore,  $v^{(1)}(x^n,2)$ is constantly in state 2
except the times 1 and $n$, where it equals to 1.
Thus, if $Y_{n-1}=2$, then the Viterbi and restricted Viterbi
path differ at every time from $2$ to $n-1$:
the Viterbi stays in 1 and the restricted Viterbi stays in 2.
\paragraph{Checking (\ref{badpeep}).} Since given our data,
$Y_{t_1}$  can take on two values only, we have for every
$t=1,\ldots,n$,
\begin{align*}
&\P(Y_t=v^{(1)}_t(x^n,Y_{t_1})|X^n=x^n)=
\sum_{s=1}^2\P(Y_t=v^{(1)}_t(x^n,s)|X^n=x^n,Y_{t_1}=s)\P(Y_{t_1}=s|X^n=x^n).\end{align*}
On the other hand, obviously
$$\P(Y_t=v_t(x^n)|X^n=x^n)=\sum_{s=1}^2\P(Y_t=v_t(x^n)|X^n=x^n,Y_{t_1}=s)\P(Y_{t_1}=s|X^n=x^n).$$
Because $v^{(1)}(x^n,1)=v(x^n)$ and $\P(Y_{t_1}=2|X^n=x^n)>0$, it
immediately follows that inequality (\ref{badpeep}) holds if and
only if
\begin{equation}\label{badpeep2}
\sum_{t=1}^{n}\P(Y_t=v_t(x^n)|X^n=x^n,Y_{t_1}=2)>\sum_{t=1}^{n}\P(Y_t=v^{(1)}_t(x^n,2)|X^n=x^n,Y_{t_1}=2).
\end{equation}
Recall that $t_1=n-1=m+1$. Let for every $i=1,2,3$,
$$Q_t(i):=\P(Y_t=i|X^n=x^n,Y_{n-1}=2),\quad t=1,\ldots,n.$$
With this notation, (\ref{badpeep2}) holds if and only if
\begin{equation}\label{non-neg}
 \sum_{t=2}^{m}Q_t(1)>1+\sum_{t=2}^{m}Q_t(2).
\end{equation}
%
This is indeed so in our example. Let $\delta=1$, consider
$\epsilon=0.2$ and $\epsilon=0.01$. In Table 9, the
values of the right-hand side and left-hand side of inequality
$(\ref{non-neg})$ have been calculated for some values of $m$. Observe that
for already $m=7$, inequality (\ref{non-neg}) holds. The
difference $\sum_{t=2}^m Q_t(1)-\sum_{t=2}^m Q_t(2)$ grows with
increasing $m$, and we will show that it can be made arbitrarily
large.
\newline \newline
{{\bf{Table 9}}. Comparison of accuracy before and after peeping}.
\begin{center}
\begin{tabular}{|rcc|rcc|}
  \hline
  $\epsilon=0.2$ & & & $\epsilon=0.01$ & & \\
  $m$ & $\sum_{t=2}^m Q_t(1)$ & $\sum_{t=2}^m Q_t(2)+1$  & $m$ & $\sum_{t=2}^m Q_t(1)$ & $\sum_{t=2}^m Q_t(2)+1$ \\
  \hline
  3   & 0.79 & 2.09  &  3  & 0.77  &  2.14 \\
  5   & 1.80 & 2.46  &  5  & 1.82  &  2.66 \\
  6   & 2.29 & 2.58  &  6  & 2.38  &  2.79 \\
  7   & 2.78 & 2.70  &  7  & 2.96  &  2.87 \\
  98  & 45.26 & 14.82  &  98 & 56.52  & 4.00  \\
  998 & 465.26 & 134.82 & 998 & 586.13 & 14.38 \\
  \hline
\end{tabular}
\end{center}
\paragraph{The difference $\sum_{t=2}^{m}Q_t(1) -
\sum_{t=2}^{m}Q_t(2)$ goes to infinity with $m$.}
At first we will show that the probabilities $Q_t(i)$ can be calculated recursively.
Let $\alpha_t(i)$ and $\beta_t(j)$ denote the usual forward and backward probabilities, i.e.
\[ \alpha_t(i)=p(x^t,Y_t=i), \quad \beta_t(j)=p(x_{t+1}^n|Y_t=j).\]
Let
\[ \gamma_{t_1,t_2}(i,j):=p(x_{t_1+1}^{t_2},Y_{t_2}=j|Y_{t_1}=i) .\]
Then for $t=2,\ldots,n-2$, $Q_t(i)$
can be expressed as
\[ Q_t(i) =\frac{\alpha_t(i)\gamma_{t,n-1}(i,2)\beta_{n-1}(2)}{\sum_i \alpha_t(i)\gamma_{t,n-1}(i,2)\beta_{n-1}(2)}=
\frac{\alpha_t(i)\gamma_{t,n-1}(i,2)}{\sum_i \alpha_t(i)\gamma_{t,n-1}(i,2)}.\]
Observe that $Q_{n-1}(1)=0$ and $Q_{n-1}(2)=Q_1(1)=Q_n(1)=1$. The quantities $\gamma_{t_1,t_2}(i,j)$ can be seen as restricted
backward probabilities. Because $x_3=x_4=\ldots=x_m=z$ and $f_1(z)=f_2(z)=f_3(z)=1$, we can calculate the forward and
restricted backward probabilities recursively. Let $u:={2\over 3}(1-\e)$ and $v:={2\over 3}\e$.
Let for any $t$,
\[ \alpha_t:=(\alpha_t(1),\alpha_t(2),\alpha_t(3))',\]
and for any $t<n-1$,
\[ \gamma_{t,n-1}:=(\gamma_{t,n-1}(1,2),\gamma_{t,n-1}(2,2),\gamma_{t,n-1}(3,2))'.\]
Then the $\alpha$-recursion is given as follows:
\[ \alpha_2(1)=\pi_1 u \, , \quad \alpha_2(2) =\pi_1 v (1+\delta)\, , \quad \alpha_2(3)={\pi_1\over 3},  \]
and for any $t=3,\ldots, m$,
\[ \alpha_t'=\alpha'_{t-1}\mathbb{P}, \quad \mbox{thus} \quad \alpha_t'=\alpha'_2 \mathbb{P}^{t-2} \, . \]
The recursion for the $\gamma$-probabilities is given as follows:
\[ \gamma_{n-2,n-1}(1,2)=v, \quad
\gamma_{n-2,n-1}(2,2)=u, \quad \gamma_{n-2,n-1}(3,2)=0, \]
and for any $t=2,\ldots,n-3$,
\[ \gamma_{t,n-1} = \mathbb{P}^{n-2-t} \gamma_{n-2,n-1}.   \]
Therefore, for any $t=2,\ldots,m$,
\[Q_t(i) = \frac{\alpha_2'\mathbb{P}^{t-2}A_i \mathbb{P}^{n-2-t} \gamma_{n-2,n-1}}
{\sum_i \alpha_2'\mathbb{P}^{t-2}A_i \mathbb{P}^{n-2-t} \gamma_{n-2,n-1}}=
\frac{\alpha_2'\mathbb{P}^{t-2}A_i \mathbb{P}^{m-t} \gamma_{n-2,n-1}}
{ \alpha_2'\mathbb{P}^{m-2} \gamma_{n-2,n-1}} \, , \]
where $A_i$ is a matrix having all entries zero except $a_{ii}=1$.
If $m\to \infty$, then
$$\mathbb{P}^{m}\to \left(
                      \begin{array}{ccc}
                        \pi_1 & \pi_2 & \pi_3 \\
                        \pi_1 & \pi_2 & \pi_3 \\
                        \pi_1 & \pi_2 & \pi_3 \\
                      \end{array}
                    \right)=:\mathbb{P}^{\infty}.$$
Hence, if $t$ is large and $m-t$ is large as well, then
$$\alpha'_t\approx \alpha'_2 \mathbb{P}^{\infty},\quad
\gamma_{t,n-1}\approx \mathbb{P}^{\infty}\gamma_{n-2,n-1},$$ so that
$$\alpha_t(s)\approx \left(\sum_i \alpha_2(i)\right)\pi_s,\quad
\gamma_{t,n-1}(s,2)\approx \sum_i \gamma_{n-2,n-1}(i,2) \pi_i.$$ Hence, if $t$ is
far from the beginning and from the end, then
$$ Q_t(s)\approx \frac{\left(\sum_i \alpha_2(i)\right)\pi_s\left(\sum_i \gamma_{n-2,n-1}(i,2) \pi_i\right)}
{\left(\sum_i \alpha_2(i)\right)\left(\sum_i \gamma_{n-2,n-1}(i,2) \pi_i\right)}= \pi_s.$$
%
%
Since $\pi_1>\pi_2$, the argument above shows that choosing $m$ big
enough, the difference $\sum_{t=2}^m Q_t(1)-\sum_{t=2}^m Q_t(2)$ can
be arbitrarily large. Hence, given that $m$ is big enough, in this example peeping has definitely a
negative effect .
\paragraph{The limit of $\P(Y_{n-1}=2|X^n=x^n)$.}
We just saw that as $n$ grows, the difference between the left- and right-hand side
of (\ref{badpeep2}) can get arbitrarily large. This
does not necessarily imply that the difference between the left- and right-hand side
of (\ref{badpeep}) grows with $n$, unless we can show
that $\P(Y_{n-1}=2|X^n=x^n)$ is bounded away from zero as $n$ grows.
In this example this is indeed the case, since  
$\P(Y_{n-1}=2|X^n=x^n)$ converges to a non-zero limit. Since $\alpha_{n-1}'=\alpha_{n-2}'\mathbb{P}$ and
$\alpha_{n-2} \to \big(\sum_i \alpha_2(i)\big)\pi$ as $n \to \infty$,
we have
$$\alpha_{n-1}(1)\to \big(\sum_i \alpha_2(i)\big)(\pi_1 u+ \pi_2 v +
\pi_3 {1\over 2}),\quad \alpha_{n-1}(2)\to \big(\sum_i \alpha_2(i)\big)(\pi_1 v+ \pi_2 u).$$
Therefore (because $\alpha_{n-1}(3)=0$) we obtain that
$$\P(Y_{n-1}=2|X^n=x^n)= \frac{\alpha_{n-1}(2)v}{\alpha_{n-1}(1)u+\alpha_{n-1}(2)v}$$
$$\to {(\pi_1 v+ \pi_2 u)v \over (\pi_1 u+
\pi_2 v + \pi_3 {1\over 2})u+(\pi_1 v+ \pi_2 u)v}>0.$$
The limit above is 0.066667 and 0.000198
for $\epsilon=0.2$ and $\epsilon=0.01$, for example.
Hence we can conclude that in our example, the difference between the left- and right-hand side of
(\ref{badpeep}) goes to infinity as $n$ grows, implying that the
expected number of additional classification errors caused by unsuccessful 
peeping can be arbitrarily large.
\section{Appendix}
\subsection{Proof of Proposition 2.1} Let $x^n$ and $t\in
\{2,\ldots,n-1\}$ be fixed. Recall that $S=\{1,\ldots, K\}$.  Let us
estimate $\gamma_t(s)$ for any state $s\in S$ from below and from
above. Since
\begin{equation}\label{gammat}
\gamma_t(s)=\sum_{s'}\sum_{s''}\alpha(x^{t-1},s')p_{s'
s}f_s(x_t)p_{s s''}\alpha(s'',x_{t+1}^n),
\end{equation}
we have
\begin{align*}
\gamma_t(s)&\geq p(x^{t-1})(\min_{s'} p_{s's})f_s(x_t)(\min_{s'} p_{ss'})p(x_{t+1}^n),\\
\gamma_t(s)&\leq p(x^{t-1})(\max_{s'} p_{s's})f_s(x_t)(\max_{s'}
p_{ss'})p(x_{t+1}^n).
\end{align*}
Assume without loss of generality that the Viterbi alignment passes
state 1 at time point $t$, that is $v_t=1$. Let $v_{t-1}=a$ and $v_{t+1}=b$.
Then for any other state $s\ne 1$ it holds that
$$p_{a1}f_1(x_t)p_{1b}\geq p_{as}f_s(x_t)p_{sb},$$
or equivalently,
\begin{equation}\label{vorra}
f_1(x_t)\geq \left( {p_{as}\over
p_{a1}}\right)f_s(x_t)\left({p_{sb}\over p_{1b}}\right).
\end{equation}
Let $s\ne 1$ be an arbitrary state. Using the upper bound for
$\gamma_t(s)$ and the lower bound for $\gamma_t(1)$ together with
(\ref{vorra}), we get
\begin{align*}
{\gamma_t(1)\over \gamma_t(s)}
&\geq {(\min_{s'} p_{s'1})\over (\max_{s'} p_{s's})} {p_{as}\over p_{a1}}{p_{sb}\over p_{1b}}{(\min_{s'} p_{1s'})\over (\max_{s'} p_{ss'})}\\
&\geq  {(\min_{s'} p_{s'1})\over (\max_{s'} p_{s's})}{(\min_{s'}
p_{s's})\over (\max_{s'} p_{s'1})}{(\min_{s'} p_{ss'})\over
(\max_{s'} p_{1s'})}{(\min_{s'} p_{1s'})\over (\max_{s'} p_{ss'})}
\geq \sigma_1^2\sigma_2^2.
\end{align*}
Hence, for $t\in \{2,\ldots,n-1\}$, the classification probability
has the following lower bound:
$$\P(Y_t=v_t(x^n)|X^n=x^n)=\P(Y_t=1|X^n=x^n)={\gamma_t(1)\over
\sum_s\gamma_t(s)}\geq {\sigma_1^2\sigma_2^2 \over
\sigma_1^2\sigma_2^2 +(K-1)}.$$
Consider now the cases $t=1$ and $t=n$. For $t=1$, only the states with positive initial probability
are considered. For such a state $s$, equation (\ref{gammat})
becomes
\[ \gamma_1(s)=\sum_{s''}\pi_s f_s(x_1)p_{s s''}\alpha(s'',x_{2}^n).\]
For $t=n$ and any $s\in S$,
\[\gamma_n(s)=\sum_{s'} \alpha(x^{n-1},s')p_{s's}f_s(x_n). \]
Similarly, the ratios in (\ref{vorra}) become for $t=1$ and $t=n$, respectively,
\[f_1(x_1)\geq \left( {\pi_{s}\over
\pi_{1}}\right)f_s(x_1)\left({p_{sb}\over p_{1b}}\right),\quad
f_1(x_n)\geq \left( {p_{as}\over p_{a1}}\right)f_s(x_n). \]
Thus,
\begin{align*}
{\gamma_1(1)\over \gamma_1(s)} &\geq {\pi_1 \over \pi_s}
{\pi_{s}\over \pi_{1}} {p_{sb}\over p_{1b}}{(\min_{s'}
p_{1s'})\over (\max_{s'} p_{ss'})}\geq
{(\min_{s'}p_{ss'})\over (\max_{s'}p_{1s'})}{(\min_{s'} p_{1s'})\over (\max_{s'} p_{ss'})}\geq \sigma_1^2,\\
{\gamma_n(1)\over \gamma_n(s)}&\geq {(\min_{s'}p_{s'1})\over
(\max_{s'}p_{s's})}{p_{as}\over p_{a1}}\geq {(\min_{s'}p_{s'1})\over
(\max_{s'}p_{s's})}{(\min_{s'}p_{s's})\over (\max_{s'}p_{s'1})}\geq
\sigma_2^2,
\end{align*}
and the corresponding bounds for the classification probabilites are
$$\P(Y_1=v_1(x^n)|X^n=x^n)\geq {\sigma_1^2 \over
\sigma_1^2 +(K_1-1)},\quad \P(Y_n=v_n(x^n)|X^n=x^n)\geq {\sigma_2^2
\over \sigma_2^2 +(K-1)}.$$
%
%
%
\subsection{Proof of Proposition 2.2}
To prove the proposition, we use Lemma 5.1 from \cite{doob}. We present the lemma using the same notation as in \cite{doob}.
The random variables of the Markov chain are denoted by $x_1,x_2,\ldots$, the state space is denoted by $X$, and $\mathcal F_X$
is a Borel field of $X$ sets. Let for $A \in \mathcal F_X$, $p(\xi,A)=\sum_{\eta \in A} p_{\xi \eta}$, and let
$p^{(n)}(\xi,A)$ denote the corresponding $n$-step probability.
 The conditional probability that (from initial point $\xi$) the system will be in a state of $A \in \mathcal F_X$
at some time during the first $n$ transitions, is denoted by $\tilde{p}^{(n)}(\xi,A)$, that is
\[ \tilde{p}^{(n)}(\xi,A)=\P\{\cup_{j=2}^{n+1} [x_j(\omega)\in A]|x_1(\omega)=\xi \}. \]
Hypothesis (D) in \cite{doob} is the Doeblin condition.
\newline \newline
{\bf{Hypothesis (D)}} {\textit{There is a (finite-valued) measure $\varphi$ of sets $A \in \mathcal F_X$ with $\varphi(X)>0$,
an integer $\nu \geq 1$, and a positive $\varepsilon$, such that
\[ p^{(\nu)}(\xi,A) \leq 1-\varepsilon \quad \mbox{if} \quad \varphi(A) \leq \varepsilon \, .\] }}
Hypothesis (D) is always satisfied in the case of finite state space, thus it imposes no restriction on
finite dimensional stochastic matrices.
%
\begin{lemma}\label{5.1} (\textrm{Doob, 1953}) Under Hypothesis (D), if a set $A \in \mathcal F_X$
has the property that
\begin{equation} \label{doobprop}
\lim_{n \to \infty} \tilde{p}^{(n)}(\xi,A)=\sup_n \tilde{p}^{(n)}(\xi,A)>0, \quad \forall \xi \in X,
\end{equation}
then there is a positive integer $\mu$ and a positive $\rho < 1$ for
which
\[ \tilde{p}^{(n)}(\xi,A) \ge 1-\rho^{(n/\mu)-1}, \quad \xi \in X. \]
\end{lemma}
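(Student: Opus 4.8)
The plan is to reduce the statement to a single uniform hitting estimate and then obtain the geometric bound by submultiplicativity. Throughout I would write
$r_n(\xi):=1-\tilde p^{(n)}(\xi,A)=\P\{x_j(\omega)\notin A,\ j=2,\ldots,n+1\mid x_1(\omega)=\xi\}$
for the probability of avoiding $A$ during the first $n$ transitions, and set $M_n:=\sup_{\xi\in X}r_n(\xi)$. Conditioning on the state $x_{m+1}=\eta$ and using the Markov property, avoiding $A$ through time $m+n+1$ factors as $r_{m+n}(\xi)=\int_{A^c}\mathbb Q^{(m)}_\xi(d\eta)\,r_n(\eta)$, where $\mathbb Q^{(m)}_\xi$ is the taboo sub-probability of avoiding $A$ up to time $m+1$ and landing at $\eta$, of total mass $r_m(\xi)$. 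Hence $r_{m+n}(\xi)\le M_n\,r_m(\xi)$, and taking suprema gives the submultiplicative inequality $M_{m+n}\le M_mM_n$.

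The theorem then hinges on the following uniform estimate, which I isolate as the key step: there are an integer $\mu$ and a constant $\delta\in(0,1)$ with $\tilde p^{(\mu)}(\xi,A)\ge\delta$ for every $\xi\in X$, i.e. $M_\mu\le 1-\delta$. Granting this, I would write $n=k\mu+s$ with $0\le s<\mu$ and $k=\lfloor n/\mu\rfloor$, and combine submultiplicativity with $M_s\le 1$ to get, for $\rho:=1-\delta\in(0,1)$, the bound $M_n\le M_\mu^{k}\le\rho^{\lfloor n/\mu\rfloor}\le\rho^{(n/\mu)-1}$. Equivalently $\tilde p^{(n)}(\xi,A)=1-r_n(\xi)\ge 1-M_n\ge 1-\rho^{(n/\mu)-1}$, which is exactly the claimed inequality.

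To establish the key step I would use Hypothesis (D) together with the standing assumption $h(\xi):=\lim_n\tilde p^{(n)}(\xi,A)>0$ for all $\xi$ (note $\tilde p^{(n)}$ is nondecreasing in $n$, so the limit is automatically the supremum, and the content of the hypothesis is its strict positivity). The clean route is via a uniform minorization: from Hypothesis (D) one extracts an integer, which I may take to be $\nu$, a constant $\varepsilon'>0$, and a probability measure $\psi$ on $\mathcal F_X$ with $p^{(\nu)}(\xi,\cdot)\ge\varepsilon'\psi(\cdot)$ for all $\xi$. Given this, for $\mu=\nu+k$ the Markov property yields, uniformly in $\xi$,
\[ \tilde p^{(\mu)}(\xi,A)\ \ge\ \int p^{(\nu)}(\xi,d\eta)\,\tilde p^{(k)}(\eta,A)\ \ge\ \varepsilon'\!\int \tilde p^{(k)}(\eta,A)\,\psi(d\eta). \]
Since $\tilde p^{(k)}(\eta,A)\uparrow h(\eta)$ and $h>0$ everywhere, monotone convergence gives $\int \tilde p^{(k)}(\eta,A)\,\psi(d\eta)\uparrow\int h\,d\psi=:I>0$; choosing $k$ so large that this integral exceeds $I/2$ produces $\tilde p^{(\mu)}(\xi,A)\ge\varepsilon' I/2$ for all $\xi$, and the key step follows with $\delta=\varepsilon' I/2$. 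The decisive feature is that the reference measure $\psi$ is independent of $\xi$, so monotone convergence upgrades the merely pointwise positivity of $h$ into a uniform lower bound.

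The hard part will be the extraction of the uniform minorization $p^{(\nu)}(\xi,\cdot)\ge\varepsilon'\psi$ from Hypothesis (D) as worded, i.e. the classical fact that under (D) the whole space behaves as a small set. This is the only place where the precise form of (D) is used, and the ingredients are exactly those supplied by (D): the finiteness of $\varphi$ lets me pick $\kappa>0$ with $\varphi(\{h<\kappa\})\le\varepsilon$ (since $\{h<\kappa\}\downarrow\emptyset$), and applying (D) to this set forces $p^{(\nu)}(\xi,\{h\ge\kappa\})\ge\varepsilon$ uniformly in $\xi$; the common dominating measure $\psi$ is then assembled from the $\nu$-step laws over this $\varphi$-substantial region. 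I would spell out this construction carefully, as it carries the whole Doeblin input; the submultiplicativity reduction and the monotone-convergence estimate above are then routine.
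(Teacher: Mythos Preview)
The paper does not actually prove this lemma; it simply cites Doob (1953) and remarks that the proof is ``by induction.'' Your submultiplicativity argument for $M_n=\sup_\xi(1-\tilde p^{(n)}(\xi,A))$ is precisely that induction, and the reduction of the whole statement to a single uniform hitting estimate $M_\mu\le 1-\delta$ is correct and standard. So at the structural level your proposal both matches and fleshes out what the paper only gestures at.

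The one genuine gap is your extraction of a global minorization $p^{(\nu)}(\xi,\cdot)\ge\varepsilon'\psi(\cdot)$ from Hypothesis~(D). As stated, (D) does not yield this: it allows finitely many disjoint ergodic classes, and then no single probability $\psi$ can minorize all the $\nu$-step kernels simultaneously. Your last paragraph senses the difficulty and brings in $h>0$ by passing to $\{h\ge\kappa\}$, but the clause about ``assembling $\psi$ from the $\nu$-step laws over this region'' is where the argument remains incomplete---the measures $p^{(\nu)}(\xi,\cdot)$ for different $\xi$ need have no common component even when restricted there.

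The repair is to drop $\psi$ entirely and apply (D) directly. Choose $\kappa>0$ with $\varphi(\{h\le\kappa\})\le\varepsilon/2$ (possible since $\{h\le\kappa\}\downarrow\emptyset$ and $\varphi$ is finite), then choose $N$ with $\varphi(E_N)\le\varepsilon$ where $E_N:=\{\eta:\tilde p^{(N)}(\eta,A)<\kappa\}$ (possible since $E_N\downarrow\{h\le\kappa\}$). Hypothesis~(D) applied to $E_N$ gives $p^{(\nu)}(\xi,E_N^c)\ge\varepsilon$ for every $\xi$, while on $E_N^c$ one has $\tilde p^{(N)}(\cdot,A)\ge\kappa$ by construction; hence
\[
\tilde p^{(\nu+N)}(\xi,A)\ \ge\ \int_{E_N^c}p^{(\nu)}(\xi,d\eta)\,\tilde p^{(N)}(\eta,A)\ \ge\ \varepsilon\kappa
\]
uniformly in $\xi$. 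This delivers the key step with $\mu=\nu+N$ and $\delta=\varepsilon\kappa$, after which your submultiplicativity bound $M_n\le\rho^{(n/\mu)-1}$ goes through exactly as you wrote it.
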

Lemma \ref{5.1} is proved by induction. \\\\
Recall that $W^{\ast}_t=\min\{w>t+r: \,\, X_{w-r}^w \in {\mathcal X}_o^{r+1}\}$.
Consider an arbitrary $t$. Then $\P(W^{\ast}_t-t>k)= 1 -\P(W^{\ast}_t \le t+k).$
Suppose $W^{\ast}_t=t+l$ for some $l>r$. Then $X_{t+l-r}^{t+l}\in
{\mathcal X}_o^{r+1}$. Since $\forall x \in {\mathcal X}_o$,
$\max_{s \notin C} f_s(x)=0$, we are interested in only those state
paths, where $Y_{t+l-r}^{t+l}\in C^{r+1}$. To prove the proposition, we define two new Markov chains
$U$ and $Z$, and consider an equivalent event to $\{W^{\ast}_t \le t+k\}$ for the chain $Z$. To $Z$, we can apply
Doob's lemma. \\\\
We start with defining a new Markov chain $U=\{U_t\}_{t=1}^{\infty}:=\{Y_t,I_{X_t}(\mathcal X_o)\}_{t=1}^{\infty}$,
where
\[
I_{X_t}(\mathcal X_o)=\left\{
               \begin{array}{ll}
                 1, & \hbox{if $X_t \in \mathcal X_o$;} \\
                 0, & \hbox{if $X_t \notin \mathcal X_o$.}
               \end{array}
             \right.
\]
Since $\forall x \in {\mathcal X}_o$, $f_s(x)=0$ when $s \notin C$,
the states where $Y_t \notin C$ and $X_t \in \mathcal X_o$ are not
possible. Thus, the state space of $U$ has $K+|C|=S_U$ possible states.
The transition probabilities for $U$ are given by a matrix
$\mathcal P$ as follows: let $u_t=(i,k)$ and $u_{t+1}=(j,l)$, then
\[ \mathcal P(u_t,u_{t+1})=\P(U_{t+1}=u_{t+1}|U_t=u_t)=
    \P(Y_{t+1}=j,I_{X_{t+1}}(\mathcal X_o)=l |Y_{t}=i, I_{X_t}(\mathcal X_o)=k)  \]
\[ =\P( I_{X_{t+1}}(\mathcal X_o)=l|Y_{t+1}=j)\P(Y_{t+1}=j|Y_t=i)=
\left\{
               \begin{array}{ll}
                p_{ij} P_j(\mathcal X_o), & \hbox{if $l = 1$;} \\
                p_{ij} P_j(\mathcal X_o^c), & \hbox{if $l = 0$.}
               \end{array}
             \right. \]
Observe that if $j \notin C$, then $P_j(\mathcal X_o^c)=1$. Define
now the Markov chain $Z=\{Z_t\}_{t=1}^{\infty}$ as
\[ Z_t:=(U_t,U_{t+1},\ldots,U_{t+r}).\]
This chain has $S_U^{r+1}$ possible states and the transitions for
$Z$ are determined by the transition probabilities for $U$. A
transition from $Z_t$ to $Z_{t+1}$ is possible only if the last $r$
elements of $Z_t$ and the first $r$ elements of $Z_{t+1}$ coincide.
The transition probability in this case is given by $\mathcal
P(u_{t+r},u_{t+r+1})$.

Let $H$ denote the subset of states of $Z$, such that for all $U_j$
in $Z_t$, $j=t,\ldots,t+r$, $Y_j \in C$ and $I_{X_j}(\mathcal X_o)=1$,
i.e.~$Y_t^{t+r} \in C^{r+1}$ and $X_t^{t+r}\in \mathcal X_o^{r+1}$.
There are $|C|^{r+1}$ such possible states. Then the event
$\left\{\bigcup_{i=t+1}^{t+k-r}(Z_i \in H)\right\}$ is equivalent to the event
$\{W_t^{\ast} \le t+k \}$.\\\\
To apply Doob's lemma, we have to check that property (\ref{doobprop}) holds for $Z$
and our set $H$.
We have for large $n$:
\[ \P\left\{\cup_{j=2}^{n+1}(Z_j \in H)|Z_1=\xi \right\} \geq \P(Z_n\in H |Z_1 = \xi) \]
\[  = \P(Y_n^{n+r}\in C^{r+1}, X_n^{n+r} \in \mathcal X_o^{r+1}|Z_1=\xi)\]
\[ = \P( X_n^{n+r}\in \mathcal X_o^{r+1}|Y_n^{n+r}\in C^{r+1})\P(Y_n^{n+r} \in C^{r+1} |Z_1=\xi) =:prob_1 \cdot prob_2. \]
Consider at first $prob_1$. According to the cluster definition,
$\min_{s \in C} f_s(x) > \epsilon$ for some $\epsilon >0$
for every $x \in \mathcal X_o$. Therefore, $\int_{\mathcal X_o} f_s(x) d\mu >
\epsilon \mu(\mathcal X_o)=m$ if $s \in C$. Thus,
\[ prob_1= \prod_{t=n}^{n+r} \P(X_t\in \mathcal X_o|Y_t\in C) > m^{r+1}. \]
Consider now $prob_2$. Recall that $R=(p_{ij})_{i,j\in C}$ and due to {\bf A1}, $R^r$ is
strictly positive. Therefore, $\min_{i \in C} p^{(r)} (i,C)=\min_{i
\in C} \sum_{j \in C} p^{(r)}_{ij}>\delta$ for some $\delta >0$. Let the state of $Y_{r+1}$ in
$Z_1=\xi$ be $s$. We obtain:
\[ prob_2=\P(Y_n \in C,Y_{n+1} \in C,\ldots,Y_{n+r} \in C,Z_1=\xi)/\P(Z_1=\xi)\]
\[ =\sum_{i\in C}\sum_{j\in C} \P(Y_{n+1} \in C,\ldots,Y_{n+r-1} \in C,Y_{n+r}=j|Y_n=i,Z_1=\xi)\P(Y_n=i|Z_1=\xi) \]
\[ =\sum_{i \in C} p^{(r)} (i,C) \P(Y_n=i|Z_1=\xi)> \delta \sum_{i \in C} \P(Y_n=i|Y_{r+1}=s)=
\delta \sum_{i \in C} p^{(n-r-1)}_{si}.  \]
Since $Y$ is irreducible, there exist $n_s$ and $\eta_s>0$ for every $s\in S$
such that $\sum_{i \in C} p^{(n_s-r-1)}_{si}=\eta_s >0$. Take
$\eta^{\ast}=\min_s \eta_s$ and $n^{\ast}=\max_s n_s$. Then since
$\P\left\{\cup_{j=2}^{n+1} (Z_j \in H)|Z_1=\xi \right\}$ is monotone
and nondecreasing, we have that for $n>n^{\ast}$,
\[ \P\left\{\cup_{j=2}^{n+1} (Z_j \in H)|Z_1=\xi \right\}>m^{r+1} \delta \eta^{\ast}.\]
Observe that this holds for every $t$, i.e.~when we condition on $Z_t$ and take the union over $\{t+1,\ldots,t+n\}$.
Now we can prove Proposition 2.2.
\paragraph{Proof of Proposition 2.2.} We have:
\[ \P(W^{\ast}_t-t>k)= 1 -\P(W^{\ast}_t \le t+k)=1-\P\left\{\cup_{i=t+1}^{t+k-r} (Z_i \in H)\right\}\]
\[  = 1-\sum_{\xi}\P\left\{\cup_{i=t+1}^{t+k-r} (Z_i \in H)|Z_t=\xi \right\} \P(Z_t =\xi) \stackrel{(\text{Lemma \ref{5.1})}}{\le}
 \rho^{(k-r)/\mu-1} = a \exp[-bk],\]
where $a=\rho^{-r/\mu-1}$ and $b=-{1 \over \mu}\ln{\rho}$.
\subsection{Emission matrix for Subsection 3.2.1}
$$ \left( \begin{array}{cccccc}
       P_1 &  P_2 &   P_3 &  P_4 & P_5  &  P_6   \\
 0.1059 & 0.0636 & 0.0643 & 0.1036 & 0.1230 & 0.1230 \\
 0.0107 & 0.0171 & 0.0135 & 0.0081 & 0.0111 & 0.0128 \\
 0.0538 & 0.0319 & 0.0775 & 0.0634 & 0.0415 & 0.0345 \\
 0.0973 & 0.0477 & 0.0620 & 0.1120 & 0.0852 & 0.0848 \\
 0.0436 & 0.0576 & 0.0330 & 0.0371 & 0.0386 & 0.0399 \\
 0.0303 & 0.0484 & 0.1133 & 0.0447 & 0.0321 & 0.0229 \\
 0.0203 & 0.0227 & 0.0259 & 0.0188 & 0.0197 & 0.0221 \\
 0.0564 & 0.1010 & 0.0372 & 0.0577 & 0.0694 & 0.0593 \\
 0.0672 & 0.0443 & 0.0574 & 0.0540 & 0.0671 & 0.0810 \\
 0.1227 & 0.1068 & 0.0674 & 0.0994 & 0.1279 & 0.1477 \\
 0.0240 & 0.0219 & 0.0181 & 0.0214 & 0.0293 & 0.0304 \\
 0.0299 & 0.0252 & 0.0561 & 0.0259 & 0.0338 & 0.0336 \\
 0.0333 & 0.0208 & 0.0757 & 0.0472 & 0.0067 & 0.0031 \\
 0.0443 & 0.0270 & 0.0330 & 0.0469 & 0.0497 & 0.0472 \\
 0.0594 & 0.0464 & 0.0470 & 0.0522 & 0.0677 & 0.0697 \\
 0.0496 & 0.0496 & 0.0744 & 0.0485 & 0.0422 & 0.0491 \\
 0.0395 & 0.0641 & 0.0572 & 0.0465 & 0.0412 & 0.0375 \\
 0.0591 & 0.1386 & 0.0473 & 0.0685 & 0.0677 & 0.0545 \\
 0.0168 & 0.0170 & 0.0111 & 0.0135 & 0.0130 & 0.0124 \\
 0.0359 & 0.0483 & 0.0286 & 0.0306 & 0.0331 & 0.0345 \\
\end{array} \right)$$
\bibliographystyle{plain}
\bibliography{peep}
\end{document}